\newtheorem{claim}{Claim}[section]
\newtheorem{defi}{Definition}[section]
\newtheorem{lemma}{Lemma}[section]
\newtheorem{theorem}{Theorem}[section]
\newcommand{\calM}{{\cal M}}
\newcommand{\calR}{{\cal R}}
\newcommand{\Gmodular}{\textup{{\sc Greedy-SM}}}
\newcommand{\Mmodular}{\textup{{\sc Mechanism-SM}}}
\newcommand{\RMmodular}{\textup{{\sc Random-Mechanism-SM}}}
\newcommand{\gre}{\textup{{\sc Greedy-KS}}}
\newcommand{\Mknapsack}{\textup{{\sc Mechanism-KS}}}
\newcommand{\RMknapsack}{\textup{{\sc Random-Mechanism-KS}}}
\newcommand{\fHK}{\textup{{\sc Fraction-HK}}}
\newcommand{\greH}{\textup{{\sc Greedy-HK}}}
\newcommand{\MHknapsack}{\textup{{\sc Mechanism-HK}}}
\newcommand{\RMHknapsack}{\textup{{\sc Random-Mechanism-HK}}}
\title{On the Approximability of Budget Feasible Mechanisms}
\author{\\Ning Chen\footnote{Division of Mathematical Sciences, School of Physical and Mathematical Sciences, Nanyang Technological University, Singapore. Email: {\tt ningc@ntu.edu.sg, ngravin@pmail.ntu.edu.sg}. Work done while visiting Microsoft Research Asia.} \and \\Nick Gravin\footnotemark[1]
\footnote{St. Petersburg Department of Steklov Mathematical Institute RAS, Russia.}
\and \\Pinyan Lu\footnote{Microsoft Research Asia. Email: {\tt pinyanl@microsoft.com}.}}
\date{}
\begin{document}

\maketitle
\begin{abstract}
Budget feasible mechanisms, recently initiated by Singer (FOCS 2010), extend algorithmic mechanism design problems to a realistic setting with a budget constraint.
We consider the problem of designing truthful budget feasible mechanisms for general submodular functions: we give a randomized mechanism with approximation ratio $7.91$ (improving the previous best-known result 112), and a deterministic mechanism with approximation ratio $8.34$.
Further we study the knapsack problem, which is special submodular function, give a $2+\sqrt{2}$ approximation deterministic mechanism (improving the previous best-known result 6), and a 3 approximation randomized mechanism. We provide a similar result for an extended knapsack problem with heterogeneous items, where items are divided into groups and one can pick at most one item from each group.

Finally we show a lower bound of approximation ratio of $1+\sqrt{2}$ for deterministic mechanisms and 2 for randomized mechanisms for knapsack, as well as the general submodular functions. Our lower bounds are unconditional, which do not rely on any computational or complexity assumptions.
\end{abstract}

\thispagestyle{empty}
\newpage
\setcounter{page}{1}

\section{Introduction}

It is well-known that a mechanism may have to pay a large amount to enforce incentive compatibility (i.e., truthfulness). For example, the seminal VCG mechanism may have unbounded payment (compared to the shortest path) in path auctions~\cite{AT02}.
The negative effect of truthfulness on payments leads to a broad study of frugal mechanism design, i.e., how should one minimize his payment to get a desired output with incentive agents? While a class of results have been established~\cite{AT02,talwar03,KKT05,KSM10,CEG10}, in practice, one cannot expect a negative overhead for a few perspectives, e.g., budget or resource limit.

Recently, Singer~\cite{PS10} consider mechanism design problems from a converse angle and initiate the study of truthful mechanism design with a sharp budget constraint: the total payment of a mechanism is upper bounded by a given value $B$. Formally, in a marketplace each agent/item has a {\em privately} known incurred cost $c_i$. For any given subset $S$ of agents, there is a {\em publicly} known valuation $v(S)$, meaning the social welfare derived from $S$. A mechanism selects a subset $S$ of agents and decides a payment $p_i$ to each $i\in S$. Agents bid strategically on their costs and would like to maximize their utility $p_i-c_i$. The objective is to design truthful budget feasible mechanisms with outputs approximately close to socially optimal solution. In other words, it studies the ``price of being truthful" in a budget constraint framework\footnote{Note that if we do not consider truthful mechanism design, the problem is purely an optimization question with an extra capacity (i.e., budget) constraint, which has been well-studied in, e.g.,~\cite{NW81,Svi04,KG05,FMV07,LMN09}, in the framework of submodularity. It is well-known that a simple greedy algorithm gives the best possible approximation ratio $1-1/e$~\cite{NW81}. When agents are weighted (corresponding to costs in our setting), the simple greedy algorithm may have unbounded approximation ratio~\cite{KMN99}; a variant of the greedy by picking the maximum of the original greedy and the agent with the largest value provides the same approximation ratio $1-1/e$~\cite{KG05}.}.

Although budget is a realistic condition that appears almost everywhere in daily life, it has not received much attention until very recently~\cite{DLN08,BGG10,CHM10,PS10}. In the framework of worst case analysis, most results are negative~\cite{DLN08}. The introduction of budget adds another dimension to mechanism design; it further limits the searching space, especially given the (already) strong restriction of truthfulness. Designing budget feasible mechanisms even requires us to bound the threshold payment of each individual, which, not surprisingly, is tricky to analyze.

While the problem in general does not admit any budget feasible mechanism\footnote{For example, one with budget $B=1$ would like to purchase a path from $s$ to $t$ in a network $\{(s,v),(v,t)\}$ where each edge has incurred cost 0. In any truthful mechanism that guarantees to buy the path (i.e., outputs the socially optimum solution), one has to pay each edge the threshold value $B$, leading to a total payment $2B$ which exceeds the given budget.}, Singer~\cite{PS10} studied an important class of valuation functions, i.e., monotone submodular functions. He gives a randomized truthful mechanism with constant approximation ratio 112 for any monotone submodular functions, and deterministic mechanisms for special cases including knapsack (ratio 6) and coverage. Further, he shows that no deterministic truthful mechanism can obtain an approximation ratio better than 2 even for knapsack.

\subsection{Our Results}

In this paper, we improve upper and lower bounds of budget feasible mechanisms for general submodular functions and knapsack, summarized in the following table.

\small
\begin{table*}[t]
\begin{center}
\begin{tabular}{|c|c|c|c|c|c|c|c|c|}\hline
 & \multicolumn{4}{c|}{Submodular functions} & \multicolumn{4}{c|}{Knapsack} \\
 \cline{2-9}
 & \multicolumn{2}{c|}{deterministic} & \multicolumn{2}{c|}{randomized} & \multicolumn{2}{c|}{deterministic} & \multicolumn{2}{c|}{randomized}  \\
 \cline{2-9}
 & upper & lower & upper & lower & upper & lower & upper & lower  \\ \hline
Singer~\cite{PS10} & $-$ & 2 & 112 & $-$ & $6$ & $2$ & $-$ & $-$ \\ \hline
Our results & $8.34^*$ & $1+\sqrt{2}$ & 7.91 & 2 & $2+\sqrt{2}$ & $1+\sqrt{2}$ & 3 & 2 \\
\hline
\end{tabular}
\\[.1in] *It may require exponential running time for general submodular functions.
\end{center}
\end{table*}
\normalsize

In truthful mechanism design, if there is no restriction on total payment, it is sufficient to focus on designing monotone allocations --- the payment to each individual winner is the unique threshold to maintain the winning status~\cite{myerson}. With a sharp budget constraint, in addition to monotone allocation, we also have to upper bound the sum of threshold payments. For submodular functions, the natural greedy algorithm is a good candidate for designing budget feasible mechanisms due to its nice monotonicity and small approximation ratio. However, the threshold payment to each winner can be very complicated because an agent can bid differently to get other ranking positions in the greedy algorithm, which results in different ways of computing the marginal contributions for the rest agents, and therefore, unpredictably change the set of winners.
Singer~\cite{PS10} bound the threshold of each winner by considering all possible ranking positions for his bids and taking the maximum of the thresholds of all these positions. In Section~3, we give a clean and tight analysis for the upper bound on threshold payment by applying the combinatorial structure of submodular functions (Lemma~\ref{lemma-pay-upper}). These upper bounds on payments suggest appropriate parameters in our randomized mechanism, which roughly speaking, selects the greedy algorithm or the agent with the largest value at a certain probability.

A difficulty of deriving deterministic mechanisms is related to the agent $i^*$ with the largest value $v(i^*)$ --- due to its (possibly large) cost greedy may not include it, which could result in a solution with an arbitrarily bad ratio. However, we cannot simply compare greedy with $v(i^*)$ because this breaks monotonicity as $i^*$ is able to manage the greedy solution by his bid (this is exactly where randomization helps). To get around of this issue, we drop $i^*$ out of the market and compare $v(i^*)$ with remaining agents in an appropriate way --- now $i^*$ is completely independent of the rest of the market and cannot affect its output --- this gives our deterministic mechanisms for submodular functions and knapsack with small approximation ratios (note that we still need to be careful about the agents in the remaining market as they are still able to manage their bids to beat $v(i^*)$).

On the other hand, it is interesting to explore limitations of budget feasible mechanisms. Singer gives a simple lower bound of $2$ on the approximation ratio and proposes that exploring the lower bounds that are dictated by budget feasibility is ``perhaps the most interesting question"~\cite{PS10}. In Section~4, we prove a stronger lower bound $1+\sqrt{2}$ for deterministic mechanisms. In most lower bounds proofs for truthful mechanisms, a number of related instances are constructed and one shows that a truthful mechanism cannot do well for all of them~\cite{CKV07,KV07,NR01,PT09}. (For example, in Singer's proof, three instances are constructed.) Our lower bound proof uses a slightly different approach: We first establish a property of a truthful mechanism for all instances provided that the mechanism has a good approximation ratio
(Lemma~\ref{lemma-knapsack-lower}), then we conclude that this property is inconsistent with the budget feasibility condition for a carefully constructed instance. Furthermore, we show a lower bound of $2$ for universally randomized budget feasible mechanisms. Both our lower bounds are independent of computational assumptions and hold for instances with small number of agents.

While submodular functions admit a good approximation budget feasible mechanism, its generalization seems to be a very difficult task and there may not be any good approximation mechanisms for instances like path and spanning tree~\cite{PS10}. In Section~5, we take a first step of this generalization by considering an extended knapsack problem with heterogeneous items, where items are of different types and we are only allowed to pick one item from each type.
Here we cannot apply the same greedy mechanism for the original knapsack as it may not even generate a feasible solution; and its approximation ratio can be arbitrarily bad if we only take the first agent from each type.
To construct a truthful mechanism with good approximation, we employ a greedy strategy with {\em deletions} --- in the process of the greedy, we either add a new item whose type has not been considered yet or replace an existing item with the new one of same type.
Although there are deletions, the greedy algorithm is still monotone (but its proof is much more involved), based on which we have similar approximation mechanisms for heterogeneous knapsack. We believe that the greedy strategy with deletions can be extended to a number of interesting non-submodular settings to derive budget feasible mechanisms.

\section{Preliminaries}

In a marketplace there are $n$ agents (or items), denoted by $A=\{1,\ldots,n\}$. Each agent $i$ has a privately known incurred {\em cost} $c_i$ (or denoted by $c(i)$). For any given subset $S\subseteq A$ of agents, there is a publicly known valuation $v(S)$, meaning the social welfare derived from $S$. We assume $v(\emptyset)=0$ and $v(S)\le v(T)$ for any $S\subset T\subseteq A$. We say the valuation function {\em submodular} if $v(S)+v(T)\ge v(S\cap T)+ v(S\cup T)$ for any $S,T\subseteq A$.

Upon receiving a {\em bid} cost $b_i$ from each agent, a mechanism decides an {\em allocation} $S\subseteq A$ as winners and a {\em payment} $p_i$ to each $i\in A$. We assume that the mechanism has no positive transfer (i.e., $p_i=0$ if $i\notin S$) and individually rational (i.e., $p_i\ge b_i$ if $i\in S$). Agents bid strategically on their costs and would like to maximize their utilities, which is $p_i-c_i$ if $i$ is a winner and $0$ otherwise. We say a mechanism is {\em truthful} if it is of best interests for all agents to bid their true costs. For randomized mechanisms, we consider universal truthfulness in this paper (i.e., a randomized mechanism takes a distribution over deterministic truthful mechanisms).

Since our setting is in single parameter domain as each agent has one private cost, it is well-known~\cite{myerson} that a mechanism is truthful if and only if its allocation rule is monotone (i.e., a winner keeps winning if he unilaterally decreases his bid) and the payment to each winner is his threshold bid to win. Therefore, we will only focus on designing monotone allocations and do not specify the payment to each winner explicitly.

A mechanism is said to be {\em budget feasible} if $\sum_i p_i \le B$, where $B$ is a given sharp budget constraint.
Assume without loss of generality that $c_i\le B$ for any agent $i\in A$, since otherwise he will never win in any (randomized) budget feasible truthful mechanism. Our objective is to design truthful budget feasible mechanisms with outputs approximately close to the social optimum. That is, we want to minimize the {\em approximation ratio} of a mechanism, which is defined as $\max_I \frac{\mathcal{M}(I)}{opt(I)}$, where $\mathcal{M}(I)$ is the (expected) value of mechanism $\mathcal{M}$ on instance $I$ and $opt(I)$ is the optimal value of the integer program: $\max_{_{S\subseteq A}}v(S)$ subjected to $c(S)\le B$.

\section{Budget Feasible Mechanisms}

For any given submodular function, denote the marginal contribution of an item $i$ with respect to set $S$ by $m_S(i)=v(S\cup\{i\})-v(S)$.
We assume that agents are sorted according to their increasing marginal
contributions relative to cost, recursively defined by: $i + 1 = \arg\max_{j \in [n]} \frac{m_{S_i}(j)}{c_j}$, where $S_i =\{1,\ldots,i\}$ and $S_0=\emptyset$. To simplify notations we will denote this order by $[n]$ and write $m_i$ instead of $m_{S_{i-1}}(i)$. This sorting, in the presence of submodularity, implies that
\[ \frac{m_1}{c_1} \geq \frac{m_2}{c_2} \geq \cdots \geq \frac{m_n}{c_n}.\]
Notice that $v(S_k) = \sum_{i\leq k} m_i$ for all $k\in [n]$.

The following greedy scheme is the core of our mechanism (where the parameters denote the set of agents $A$ and available budget $B/2$).
\begin{center}
\small{}\tt{} \fbox{
\parbox{3.0in}{
\hspace{0.05in} \\[-0.05in] $\Gmodular(A,B/2)$
\begin{enumerate}
\item Let $k=1$ and $S=\emptyset$
\item While $k\le |A|$ and $c_k\le \frac{B}{2} \cdot \frac{m_k}{\sum_{i\in S\cup\{k\}}m_i}$
\begin{itemize}
\item $S \leftarrow S\cup \{k\}$
\item $k\leftarrow k+1$
\end{itemize}
\item Return winning set $S$
\end{enumerate}
}}
\end{center}
Our mechanism for general submodular functions is as follows.
\begin{center}
\small{}\tt{} \fbox{
\parbox{4.0in}{
\hspace{0.05in} \\[-0.05in] \RMmodular
\begin{enumerate}
\item Let $A=\{i~|~c_i\le B\}$ and $i^*\in \arg\max_{i\in A} v(i)$
\item with probability $\frac{2}{5}$, return $i^*$
\item with probability $\frac{3}{5}$, return $\Gmodular(A,B/2)$
\end{enumerate}
}}
\end{center}

In the above mechanism, if it returns $i^*$, the payment to $i^*$ is $B$; if it returns $\Gmodular(A,B/2)$, the payment is more complicated and is given in~\cite{PS10}. Actually,
we do not need this explicit payment formula to prove our result.

\begin{theorem}\label{theorem-mechanism-SM}
\RMmodular \ is a budget feasible universally truthful mechanism for submodular valuation function with approximation ratio $\frac{5e}{e-1} (\approx 7.91)$.
\end{theorem}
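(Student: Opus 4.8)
The theorem packages three claims --- universal truthfulness, budget feasibility, and the approximation ratio $\frac{5e}{e-1}$ --- and I would establish them in that order. Truthfulness is the easiest: \RMmodular\ is a fixed convex combination of two deterministic mechanisms, so (we are in a single-parameter domain) it is universally truthful as soon as each of these two is monotone with threshold payments. The branch ``return $i^*$'' has, from the point of view of each agent, a constant allocation --- $i^*$ depends only on the public values $v(\cdot)$, and an agent with true cost at most $B$ remains in $A$ after lowering his bid --- and its threshold payment to $i^*$ is $B\ge c_{i^*}$, so it is monotone and individually rational; the branch ``run $\Gmodular(A,B/2)$'' is Singer's greedy mechanism, whose monotonicity and threshold payments are established in~\cite{PS10}. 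For budget feasibility I would check each realization separately: the first branch pays exactly $B$; in the second branch the winners form the set $S$ returned by the greedy, and Lemma~\ref{lemma-pay-upper} bounds the threshold payment of each $i\in S$ by a quantity of the form $B\cdot\frac{m_i}{v(S)}$, whence $\sum_{i\in S}p_i\le\frac{B}{v(S)}\sum_{i\in S}m_i=B$ --- this is precisely why the greedy is fed the budget $B/2$ rather than $B$.

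The substance of the theorem is the approximation bound. Writing $G:=v(\Gmodular(A,B/2))$ and $V:=v(i^*)$, the expected value of the mechanism is $\frac25V+\frac35G$, so I need $\opt(I)\le\frac{5e}{e-1}\left(\frac25V+\frac35G\right)=\frac{e}{e-1}\bigl(3G+2V\bigr)$. The plan is therefore to prove the structural inequality $\opt(I)\le\frac{e}{e-1}\bigl(3G+2V\bigr)$; the weights $\frac25$ and $\frac35$ are chosen precisely so that $\frac25V+\frac35G\ge\frac15(2V+3G)$ with the two contributions balanced, which then gives the ratio.

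To prove the structural inequality, first observe that if $\Gmodular(A,B/2)$ exhausts all of $A$ then $G=v(A)\ge\opt(I)$ and there is nothing to do. Otherwise let $\ell+1$ be the first agent the greedy rejects, so $c_{\ell+1}>\frac B2\cdot\frac{m_{\ell+1}}{v(S_{\ell+1})}$ with $v(S_{\ell+1})=G+m_{\ell+1}$, and note that $m_{\ell+1}=m_{S_\ell}(\ell+1)\le v(\{\ell+1\})\le V$ by submodularity and the maximality of $i^*$. For any optimal $S^*$ (so $c(S^*)\le B$), submodularity and the density ordering give
\[
\opt(I)=v(S^*)\le v(S^*\cup S_\ell)\le v(S_\ell)+\sum_{j\in S^*\setminus S_\ell}m_{S_\ell}(j)\le v(S_\ell)+\frac{m_{\ell+1}}{c_{\ell+1}}\,c(S^*),
\]
and $c(S^*)\le B$ combined with the rejection inequality bounds the last term by $2v(S_{\ell+1})=2G+2m_{\ell+1}\le2G+2V$. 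A cleaner way to package essentially the same computation is to go through the fractional optimum: $\opt(I)\le\fopt(A,B)\le2\,\fopt(A,B/2)$, the first inequality because in the density order the integral optimum never exceeds its fractional relaxation, the second because the densities $m_i/c_i$ are non-increasing; and $\fopt(A,B/2)$ is in turn controlled by $G$ and $V$ via the standard $1-\frac1e$-type analysis of the budgeted submodular greedy --- this detour is where the factor $\frac{e}{e-1}$ enters.

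I expect the last step to be the main obstacle: bounding $G$ against the fractional optimum. The greedy halts on the \emph{per-item} test $c_k\le\frac B2\cdot\frac{m_k}{v(S_k)}$, not because it has spent $B/2$, so the rejected agents $\ell+1,\ell+2,\dots$ may still be cheap and contribute to the fractional optimum; the usual overflow argument therefore needs adapting, and the bookkeeping must use the rejection inequality, the monotone densities, and ``no single marginal exceeds $V$'' simultaneously. (Incidentally, a careful treatment of the direct argument above in fact gives the sharper $\opt(I)\le 3G+2V$, hence ratio $5$; the looser but tidier fractional route already yields the stated $\frac{5e}{e-1}$.) A secondary point I would want to pin down is that Lemma~\ref{lemma-pay-upper} really delivers $\sum_i p_i\le B$ with no slack, which is again where the choice of budget $B/2$ is essential.
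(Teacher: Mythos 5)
Your proposal is correct, and the truthfulness and budget-feasibility parts coincide with the paper's: both reduce universal truthfulness to monotonicity of the two deterministic branches, and both get $\sum_{i\in S}p_i\le B$ for the greedy branch from Lemma~\ref{lemma-pay-upper} exactly as you do. Where you genuinely diverge is the approximation bound. The paper never argues directly against $\opt$; it introduces the fractional greedy value $fgre(A)$, proves Lemma~\ref{submodular reduction} ($fgre(A)\ge(1-1/e)\,\opt(A)$, via an appendix reduction to unweighted cardinality-constrained submodular maximization), and then compares the stopping conditions of $\Gmodular(A,B/2)$ and the fractional greedy to get $fgre(A)<3\,\Gmodular(A,B/2)+2v(i^*)$, yielding inequality~(\ref{eqn:opt-upper}) with the $\frac{e}{e-1}$ loss. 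Your ``direct argument'' instead applies the standard one-shot submodular inequality at the first rejected item $k+1$: $\opt\le v(S_k)+\sum_{j\in S^*\setminus S_k}m_{S_k}(j)\le v(S_k)+\frac{m_{k+1}}{c_{k+1}}B<v(S_k)+2v(S_{k+1})\le 3\,\Gmodular(A,B/2)+2v(i^*)$, using the density ordering, the rejection condition with budget $B/2$, and $m_{k+1}\le v(i^*)$. This is sound (including the $G=v(A)\ge\opt$ edge case) and is in fact sharper than the paper's bound --- it loses no $\frac{e}{e-1}$ factor, so it gives ratio $5$ and a fortiori the stated $\frac{5e}{e-1}$, while avoiding Lemma~\ref{submodular reduction} entirely; the paper's detour buys a reusable general lemma (also feeding the derandomized mechanism through~(\ref{eqn:opt-upper})), at the cost of a weaker constant. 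One caution: your secondary ``cleaner'' packaging via $\opt\le\fopt(A,B)\le2\,\fopt(A,B/2)$ is the only shaky part --- for general submodular $v$ you never define the fractional relaxation, and bounding the greedy output against it is precisely the nontrivial content of the paper's Lemma~\ref{submodular reduction} --- but since your direct argument is complete, that sketch is not needed and creates no gap.
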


\subsection{Analysis of \RMmodular}

In this subsection we analyze \RMmodular\ in terms of three respects: truthfulness, budget feasibility and approximation. They together yield a proof for Theorem~\ref{theorem-mechanism-SM}.

\subsubsection{Universal Truthfulness}
Our mechanism is a simple random combination of two mechanisms. To prove that the \RMmodular \ is universally truthful, it suffices to prove that these two mechanisms are truthful, i.e., the allocation rule is monotone.

The scheme where we simply return $i^*$ is obviously truthful. Also it is easy to see that prior step when we throw away the agents having the cost greater than $B$ does not affect truthfulness. The greedy scheme $\Gmodular(A,B/2)$ is monotone as well, since any item out of a winning set can not increase its bid and become a winner.

\subsubsection{Budget Feasibility}

While truthfulness is quite straightforward, the budget feasibility part turns out to be quite tricky.
The difficulties arise when we compute the payment to each item. Indeed, it can happen that an item changes its bid (while still remaining in the winning set) to force the mechanism to change its output. In other words, an item can control the output of the mechanism. Fortunately, in such a case no item can reduce the valuation of the output too much. That enables us to write an upper bound on the bid of each item in case of submodularity;  summing up these bounds yields budget feasibility.

If the mechanism returns $i^*$, his payment is $B$ and it is clearly budget feasible. It remains to prove budget feasibility for $\Gmodular(A,B/2)$. A similar but weaker result has been proven in~\cite{PS10} using the characterization of payments and arguing that the total payment is not larger than $B$. Here we directly show that the payment to any item $i$ in the winning set $S$ is bounded above by $\frac{m_i}{v(S)}\cdot B$; then the total payment will be bounded by $B$. Before doing that, we first prove a useful lemma.

\begin{lemma}
\label{submodular average}
Let $S\subset T\subseteq [n]$ and $t_0=\arg\max_{t\in T\setminus S}\frac{m_S(t)}{c(t)}$. Then $$\frac{v(T)-v(S)}{c(T)-c(S)}\le\frac{m_S(t_0)}{c(t_0)}.$$
\end{lemma}

\begin{proof}
Assume for contradiction that the lemma does not hold, then for all $t\in T\setminus S$, we have
$\frac{v(T)-v(S)}{c(T)-c(S)}> \frac{m_S(t)}{c(t)}.$
Then add all inequalities together, we have
$$\frac{v(T)-v(S)}{c(T)-c(S)}> \frac{\sum_{t\in T\setminus S} m_S(t)}{\sum_{t\in T\setminus S} c(t)} = \frac{\sum_{t\in T\setminus S} m_S(t)}{c(T)-c(S)}.$$
This implies that $v(T)-v(S) >\sum_{t\in T\setminus S} m_S(t)$, which contradicts the submodularity.
\end{proof}

Let $1,\ldots, k$ be the order of items in which we add them to the winning set.
Let $\emptyset=S_0\subset S_1\subset\ldots\subset S_k\subseteq [n]$ be the sequence of winning sets that we pick at each step by applying our mechanism. Thus we have $S_j=[j]$.
Now, since $v$ is sumbodular, we can write the following chain of inequalities (note that marginal contribution is smaller for larger sets).
$$\frac{m_{S_0}(1)}{c_1}\ge\frac{m_{S_1}(2)}{c_2}\ge\ldots\ge\frac{m_{S_{k-1}}(k)}{c_k}\ge\frac{2 v(S_k)}{B}.$$

%

The following is our main lemma.

\begin{lemma}\label{lemma-pay-upper}
There is no $j\in S=\Gmodular(A,B/2)$ such that it can bid more than $m_{S_{j-1}}(j)\frac{B}{v(S_k)}$ and still get into a winning set. Thus the payment to $j$ is upper bounded by $m_{S_{j-1}}(j)\frac{B}{v(S_k)}$.
\end{lemma}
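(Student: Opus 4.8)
The plan is to fix a winner $j \in S = \Gmodular(A, B/2)$ and bound the largest bid $b_j$ that keeps $j$ in the winning set, by tracking how increasing $j$'s bid reshuffles the greedy order and affects whether the while-loop stops before reaching $j$. First I would note that when $j$ raises its bid from $c_j$ to some $b_j > c_j$, the only agents whose relative ranking can change are $j$ itself (which can only move later in the order) and that the items $1, \dots, j-1$ still precede $j$ in the new order (their marginal-contribution-to-cost ratios are unaffected by $j$'s bid as long as $j$ stays behind them, and by submodularity removing $j$ from consideration only helps them). So in the modified instance $j$ occupies some position $\ell \geq j$, with $\{1, \dots, j-1\}$ still being the first $j-1$ agents picked (in the same order, since $m_{S_{i-1}}(i)/c_i$ is unchanged for $i < j$), followed by some agents from $\{j+1, \dots\}$ in positions $j, \dots, \ell-1$, and then $j$ at position $\ell$.

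The key step is the stopping condition. For $j$ to still be a winner at position $\ell$, the while-loop must not have terminated at any earlier position, and must admit $j$ at step $\ell$: writing $S'$ for the set of the first $\ell-1$ agents picked in the modified run, we need $b_j \le \frac{B}{2} \cdot \frac{m_{S'}(j)}{v(S') + m_{S'}(j)}$, equivalently $b_j \le \frac{B}{2} \cdot \frac{m_{S'}(j)}{v(S' \cup \{j\})}$. Now $S_{j-1} = \{1,\dots,j-1\} \subseteq S'$, so by submodularity $m_{S'}(j) \le m_{S_{j-1}}(j)$; this makes the numerator smaller, which is the wrong direction, so the real work is lower-bounding $v(S' \cup \{j\})$ relative to $v(S_k)$ and combining. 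Here I would use Lemma~\ref{submodular average}: applying it to $S_{j-1} \subset S' \cup \{j\}$ (note $j \in (S' \cup \{j\}) \setminus S_{j-1}$ and all other elements of that difference are agents that, in the modified greedy order, were chosen after position $j-1$, hence had ratio at most $m_{S_{j-1}}(j)/c_j$ — wait, we must be careful since $j$'s ratio is now $m_{S_{j-1}}(j)/b_j$, not $m_{S_{j-1}}(j)/c_j$), the lemma gives $\frac{v(S' \cup \{j\}) - v(S_{j-1})}{c(S' \cup \{j\}) - c(S_{j-1})} \le \frac{m_{S_{j-1}}(j)}{c_j}$ provided $j = \arg\max$ over that difference set — which holds precisely because in the modified run, after the first $j-1$ agents, $j$'s ratio $\frac{m_{S_{j-1}}(j)}{b_j}$ would only be smaller than if it were the argmax, and the true-cost ratios of the others are bounded by $j$'s true-cost ratio by the original sorting. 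This needs a careful case split on whether we compare using $c_j$ or $b_j$.

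The cleanest route, which I would actually pursue, is to argue directly about the contrapositive: suppose $j$ bids $b_j > m_{S_{j-1}}(j) \cdot \frac{B}{v(S_k)}$ and still wins at position $\ell$. Then at the step where $j$ is added, the condition $c(\text{position }\ell) = b_j \le \frac{B}{2}\cdot\frac{m_{S'}(j)}{v(S'\cup\{j\})}$ must hold, giving $v(S' \cup \{j\}) \le \frac{B}{2} \cdot \frac{m_{S'}(j)}{b_j} < \frac{B}{2} \cdot \frac{m_{S_{j-1}}(j)}{b_j} < \frac{B}{2} \cdot \frac{v(S_k)}{B} \cdot \frac{m_{S_{j-1}}(j)}{m_{S_{j-1}}(j)} = \frac{v(S_k)}{2}$, wait this requires $m_{S'}(j) \le m_{S_{j-1}}(j)$ (true by submodularity) so $v(S'\cup\{j\}) < \frac{v(S_k)}{2}$. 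On the other hand I want to show $v(S' \cup \{j\}) \ge \frac{v(S_k)}{2}$ or at least derive a contradiction with monotonicity/the chain of inequalities $\frac{m_{S_{i-1}}(i)}{c_i} \ge \frac{2v(S_k)}{B}$ displayed above. The main obstacle — and where I expect to spend the most care — is controlling the set $S'$ of agents picked before $j$ in the perturbed run: showing that its value plus $m_{S'}(j)$ cannot drop below $v(S_k)/2$, which should follow because each agent $i \in S' \setminus S_{j-1}$ was admitted by the greedy stopping rule in the perturbed run and has $\frac{m_{\text{prefix}}(i)}{c_i} \ge \frac{m_{S_{j-1}}(j)}{b_j} \cdot$(something), so these agents collectively carry enough value; quantifying this via Lemma~\ref{submodular average} applied to $S_{j-1} \subseteq S'$ and then to $S_{j-1} \subseteq S_k$ to relate both to $\frac{m_{S_{j-1}}(j)}{c_j}$ is the crux. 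I would organize it as: (i) identify the perturbed order and the admission inequality for $j$; (ii) bound $m_{S'}(j) \le m_{S_{j-1}}(j)$ by submodularity; (iii) use Lemma~\ref{submodular average} to show $v(S'\cup\{j\}) \ge$ a suitable fraction of $v(S_k)$, contradicting (i) when $b_j$ exceeds the claimed threshold; (iv) conclude the payment bound since the threshold bid is the supremum of winning bids, and $\sum_{j \in S} m_{S_{j-1}}(j) \frac{B}{v(S_k)} = \frac{B}{v(S_k)}\sum_j m_j = B$.
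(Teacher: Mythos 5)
Your setup matches the paper's: pass to the contrapositive, observe that $1,\dots,j-1$ are still picked first when $j$ raises its bid, write the admission inequality $b_j\le \frac{B}{2}\cdot\frac{m_{S'}(j)}{v(S'\cup\{j\})}$ at the step where $j$ enters, use $m_{S'}(j)\le m_{S_{j-1}}(j)$, and reduce everything to showing $v(S'\cup\{j\})\ge v(S_k)/2$ (the paper's $v(S_k)<2v(S\cup\{j\})$), from which the contradiction with $b_j>m_{S_{j-1}}(j)\frac{B}{v(S_k)}$ is immediate. But the crux --- step (iii), the lower bound on $v(S'\cup\{j\})$ --- is exactly what you leave unproved, and the route you sketch for it would not work. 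Applying Lemma~\ref{submodular average} to the pairs $S_{j-1}\subset S'\cup\{j\}$ or $S_{j-1}\subset S_k$ only gives \emph{upper} bounds on the value increments $v(S'\cup\{j\})-v(S_{j-1})$ and $v(S_k)-v(S_{j-1})$, which is the wrong direction (you notice this yourself once, then defer the difficulty with ``should follow because these agents collectively carry enough value''). Your heuristic that the agents of $S'\setminus S_{j-1}$ were admitted with good ratios also cannot suffice by itself: $S'$ may be disjoint from $S_k\setminus S_{j-1}$ and its value must still be played off against $v(S_k)$.

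The paper's argument closes this gap by comparing in the \emph{other} direction: it applies Lemma~\ref{submodular average} to the pair $S\cup\{j\}\subset S_k\cup S$ (with $S=S'$ in your notation, after disposing of the degenerate case $S\cup\{j\}=S_k\cup S$). The elements of the difference $R\setminus\{j\}$, $R=S_k\setminus S$, all have bid-ratio at most $\frac{m_S(j)}{b_j}$ because $j$ was the argmax over $[n]\setminus S$ in the perturbed run (combined with submodularity), so the lemma bounds $v(S_k\cup S)-v(S\cup\{j\})$ above by $\frac{m_S(j)}{b_j}\cdot c(R\setminus\{j\})$. Two further ingredients you never invoke are then essential: $c(S_k)\le B/2$, which follows from the original greedy admission inequalities $\frac{m_{S_{i-1}}(i)}{c_i}\ge\frac{2v(S_k)}{B}$ and $\sum_i m_i=v(S_k)$, and monotonicity $v(S_k\cup S)\ge v(S_k)$. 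Together with $\frac{m_S(j)}{b_j}<\frac{v(S_k)}{B}$ (from the assumed large bid) these give $v(S_k)-v(S\cup\{j\})<\frac{v(S_k)}{B}\cdot\frac{B}{2}$, i.e.\ the needed $v(S\cup\{j\})>v(S_k)/2$. Without identifying this comparison set and the bound $c(S_k)\le B/2$, your proposal stops short of a proof at its central step.
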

\begin{proof}
Assume that there is $j\in[k]$ such that it can bid $b_j>m_{S_{j-1}}(j)\frac{B}{v(S_k)}$ and still wins. We will use notation $b$ instead of $c$ to emphasize that we consider a new scenario where $j$ has increased its bid to $b_j$ and others remain the same.

Note that $\frac{m_{S_0}(1)}{c_1}\ge\frac{m_{S_1}(2)}{c_2}\ge\ldots\ge\frac{m_{S_{j-1}}(j)}{c_j}\ge\frac{m_{S_{j-1}}(j)}{b_j}.$ Thus $S_{j-1}$ still get into the winning set.

For bid vector $b$, denote by $S$ the set we have chosen right before $j$ is included into the winning set. Thus we have
\begin{eqnarray}
j &=& \arg\max_{i\in[n]\setminus S}\frac{m_S(i)}{b_i}, \\
\frac{m_S(j)}{b_j} &\ge& \frac{2 v(S\cup\{j\})}{B}.
\end{eqnarray}
We may assume that $S_k\cup S\supsetneq S\cup\{j\}$. Indeed, otherwise $S\cup\{j\}=S_k\cup S$ and $$\frac{m_{S_{j-1}}(j)}{b_j}\ge\frac{m_{S}(j)}{b_j}\ge\frac{2 v(S\cup\{j\})}{B}\ge \frac{2 v(S_k)}{B}\ge\frac{v(S_k)}{B}.$$
Thus $b_j\le m_{S_{j-1}}\frac{B}{v(S_k)}$ and we get a contradiction.

Let $R=S_k\setminus S$. Applying equation~(1) and Lemma~\ref{submodular average} to $S_k\cup S$ and $S\cup \{j\}$, we know that for some $r_0\in R\setminus\{j\}$,
$$\frac{v(S_k\cup S)-v(S\cup\{j\})}{b(S_k\cup S)-b(S\cup\{j\})}\le\frac{m_{S\cup\{j\}}(r_0)}{b(r_0)}\le\frac{m_S(j)}{b_j}.$$

On the other hand we know that $b_j> m_{S_{j-1}}(j)\frac{B}{v(S_k)}$. Hence,
$\frac{m_S(j)}{b_j}< \frac{m_S(j)}{m_{S_{j-1}}(j)}\frac{v(S_k)}{B}<\frac{v(S_k)}{B}.$
Combining these inequalities, we get $$\frac{v(S_k\cup S)-v(S\cup\{j\})}{b(S_k\cup S)-b(S\cup\{j\})}<\frac{v(S_k)}{B}.$$
We have $b(S_k\cup S)-b(S\cup\{j\})=b(R\setminus\{j\})=c(R\setminus\{j\})\le c(S_k).$

Recall that $\frac{m_{S_{i-1}}(i)}{c_i}\ge \frac{2 \emph{v}(S_k)}{B}$ for $i\in[k]$.  Thus $c_i\le m_{S_{i-1}}(i)\frac{B}{2 v(S_k)}$ and
$c(S_k)=\sum_{i=1}^{k}c(i)\le \frac{B}{2}.$
We get
$$\frac{v(S_k)-v(S\cup\{j\})}{B/2}\le\frac{v(S_k)-v(S\cup\{j\})}{c(S_k)}\le\frac{v(S_k\cup S)-v(S\cup\{j\})}{b(S_k\cup S)-b(S\cup\{j\})}<\frac{v(S_k)}{B}.$$
Thus, $v(S_k)<2v(S\cup \{j\}).$

Recalling inequality~(2) on $\frac{m_S(j)}{b_j}$, we derive
$$\frac{m_{S_{j-1}}(j)}{b_j}\ge\frac{m_S(j)}{b_j}\ge\frac{2 v(S\cup\{j\})}{B}>\frac{v(S_k)}{B}.$$
Hence, we arrive at the contradiction with $b_j > m_{S_{j-1}}(j)\frac{B}{v(S_k)}$.
\end{proof}

\subsubsection{Approximation Ratio}

Before analyzing the performance of our mechanism, we consider a simple greedy algorithm (without considering bidding strategies): order items according to their marginal contributions and add as many items as possible (i.e., it stops when we cannot add the next item as the sum of $c_i$ otherwise will be bigger than $B$).
Moreover we can consider the fractional variant of that, i.e., for the remaining budget we take a portion of the item at which we have stopped.
Let $\ell$ be the maximal index for which $\sum_{i=1,\ldots,\ell}c_i\le B$. Let $c'_{\ell+1}=B-\sum_{i=1,\ldots,\ell}c_i$ and $m'_{\ell+1}=m_{\ell+1}\cdot \frac{c'_{\ell+1}}{c_{\ell+1}}$. Hence, the fractional greedy solution is defined to be
\[fgre(A)\triangleq \sum_{i=1}^{\ell}m_i + m'_{\ell+1}.\]

It is well-known that the greedy algorithm is a $1-1/e$ approximation to maximization of monotone submodular functions with a cardinality constraint~\cite{NW81}. Also there was shown that the simple greedy algorithm has unbounded approximation ratio in case of weighted items with a capacity constraint. Nevertheless, a variant of greedy was suggested in \cite{KG05} which gives the same $1-1/e$ approximation to the weighted case. Next we present the following lemma, which is fundamental to our analysis, establishing the same approximation ratio for the simple greedy algorithm with fractional solution. (The proof is deferred to Appendix~\ref{appendix-submodular-reduction}.)

\begin{lemma}\label{submodular reduction}
Fractional greedy solution has approximation ratio $1-1/e$ for the weighted submodular maximization problem. That is,
\[fgre(A)\ge (1-1/e)\cdot opt(A),\]
where $opt(A)$ is the value of the optimal integral solution for the given instance $A$.
\end{lemma}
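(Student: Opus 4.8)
The plan is to establish the bound $fgre(A) \ge (1-1/e)\cdot opt(A)$ by comparing the fractional greedy run against an optimal integral solution $O$ with $v(O) = opt(A)$ and $c(O) \le B$. First I would recall the standard submodular maximization argument: at each greedy step $i$ (for $i=1,\dots,\ell$) we pick the item $i$ maximizing $m_{S_{i-1}}(j)/c_j$, and by Lemma~\ref{submodular average} (applied to $S_{i-1} \subseteq S_{i-1}\cup O$) the ratio $m_i/c_i$ is at least $\frac{v(S_{i-1}\cup O) - v(S_{i-1})}{c(S_{i-1}\cup O) - c(S_{i-1})} \ge \frac{v(O) - v(S_{i-1})}{B}$, using monotonicity ($v(S_{i-1}\cup O)\ge v(O)$) and $c(S_{i-1}\cup O) - c(S_{i-1}) \le c(O) \le B$. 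So each unit of cost spent in the greedy run closes at least a $1/B$ fraction of the remaining gap $v(O) - v(S_{i-1})$ toward $v(O)$; the same inequality holds for the fractional step $\ell+1$ since $m'_{\ell+1}/c'_{\ell+1} = m_{\ell+1}/c_{\ell+1}$.

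The key step is then a continuous/telescoping argument: think of the fractional greedy as spending budget continuously from $0$ up to $B$ (we spend exactly $B$ since $\sum_{i\le\ell}c_i + c'_{\ell+1} = B$). Let $g(x)$ denote the total marginal value accumulated after spending $x$ units of budget; $g$ is piecewise linear, and on each piece the slope equals the current ratio $m_i/c_i$, which by the previous paragraph is at least $\frac{v(O) - g(x)}{B}$. Hence $g'(x) \ge \frac{v(O) - g(x)}{B}$ wherever $g$ is differentiable, i.e. $\frac{d}{dx}\big(v(O) - g(x)\big) \le -\frac{1}{B}\big(v(O)-g(x)\big)$. Integrating this differential inequality from $0$ to $B$ with $g(0)=0$ gives $v(O) - g(B) \le v(O)\cdot e^{-1}$, that is, $fgre(A) = g(B) \ge (1 - 1/e)\,v(O) = (1-1/e)\,opt(A)$. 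To make this fully rigorous without differential inequalities, I would instead use the discrete version: writing $\delta_i$ for the budget spent at step $i$ and $G_i = g(\sum_{j\le i}\delta_j)$, we have $G_i - G_{i-1} \ge \frac{\delta_i}{B}(v(O) - G_{i-1})$, hence $v(O) - G_i \le (1 - \delta_i/B)(v(O) - G_{i-1})$, and multiplying these over all steps and using $\prod(1-\delta_i/B) \le e^{-\sum \delta_i/B} = e^{-1}$ yields the claim.

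The main obstacle I anticipate is making the fractional last step interact cleanly with Lemma~\ref{submodular average}: the lemma as stated is about integral sets, so for the fractional piece I need to argue separately that the per-unit rate $m_{\ell+1}/c_{\ell+1}$ still dominates $\frac{v(O) - v(S_\ell)}{B}$ — this follows because the greedy ordering guarantees $m_{\ell+1}/c_{\ell+1} \ge m_{S_\ell}(t)/c(t)$ for every $t \in O\setminus S_\ell$, and then averaging (exactly as in the proof of Lemma~\ref{submodular average}) over $O \setminus S_\ell$ together with submodularity gives $m_{\ell+1}/c_{\ell+1} \ge \frac{v(S_\ell \cup O) - v(S_\ell)}{c(O\setminus S_\ell)} \ge \frac{v(O) - v(S_\ell)}{B}$. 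A minor secondary point is the edge case where greedy already exhausts all of $A$ before the budget runs out (no index $\ell+1$ exists); there $v(S_\ell) = v(A) \ge v(O) = opt(A)$ and the bound is trivial. Everything else is the routine telescoping computation above.
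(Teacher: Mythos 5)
Your proof is correct, but it takes a genuinely different route from the paper's. The paper proves Lemma~\ref{submodular reduction} by a reduction to the unweighted, cardinality-constrained case: each item $i$ of weight $w_i$ is split into $w_i$ unit items, a new valuation $\nu$ is defined by randomly sampling one unit per type (a random-sampling extension of $v$), monotonicity and submodularity of $\nu$ are verified, and one then argues that the unit-weight greedy consumes the units of an item consecutively, so it coincides with the fractional greedy; the classical $1-1/e$ bound of Nemhauser--Wolsey--Fisher for cardinality constraints is then invoked, with real-valued weights handled only by an asserted ``more tedious'' extension of the same argument. You instead argue directly on the weighted instance: Lemma~\ref{submodular average}, applied to $S_{i-1}\subseteq S_{i-1}\cup O$ for an optimal integral set $O$, gives $m_i/c_i\ge\bigl(v(O)-v(S_{i-1})\bigr)/B$ at every greedy step (and for the per-unit rate $m_{\ell+1}/c_{\ell+1}$ of the fractional last step), and the discrete recursion $v(O)-G_i\le(1-\delta_i/B)\,(v(O)-G_{i-1})$ with $\sum_i\delta_i=B$ and $1-x\le e^{-x}$ yields the factor $1-1/e$. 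Your route is self-contained, reuses a lemma the paper has already proved, works for arbitrary real costs with no limiting argument, and your discrete product version is fully rigorous; the only points worth stating explicitly are the degenerate step where $O\subseteq S_{i-1}$ (Lemma~\ref{submodular average} needs $O\setminus S_{i-1}\neq\emptyset$, but then the gap is nonpositive and the step inequality holds trivially since $m_i\ge 0$) and the case $\ell=n$ with no fractional item, which you do note. What the paper's approach buys is a conceptual reduction to the known unweighted theorem; what yours buys is a shorter, sharper argument that avoids the auxiliary extension construction and the integer-weight restriction altogether.
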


Now we are ready to analyze the approximation ratio of \RMmodular. Let $T=\{1,\ldots,k\}$ be the subset returned by $\Gmodular(A,\frac{B}{2})$.
For any $j=k+1,\ldots,\ell$, we have $\frac{c_j}{m_j}\ge \frac{c_{k+1}}{m_{k+1}} > \frac{B}{2 \sum_{i=1}^{k+1}m_i}$, where the last inequality follows from the fact that the greedy strategy stops at item $k+1$. Hence, $c_j>B\cdot \frac{m_j}{2 \sum_{i=1}^{k+1}m_i}$. Same analysis shows that $c'_{\ell+1}>B\cdot \frac{m'_{\ell+1}}{2 \sum_{i=1}^{k+1}m_i}$. Therefore,
$$B\cdot \frac{\sum_{j=k+1}^{\ell}m_j+m'_{\ell+1}}{2\sum_{i=1}^{k+1}m_i}<\sum_{j=k+1}^{\ell}c_j+c'_{\ell+1}\le B.$$
Which implies that $2 \sum_{i=1}^{k+1}m_i > \sum_{j=k+1}^{\ell}m_j+m'_{\ell+1}$ and $m_{k+1}+ 2 \sum_{i=1}^{k}m_i > \sum_{j=k+2}^{\ell}m_j+m'_{\ell+1}$.   Hence,
\[fgre(A)=\sum_{i=1}^{\ell}m_i + m'_{\ell+1} =   \sum_{i=1}^{k+1}m_i  +  \sum_{j=k+2}^{\ell}m_j+m'_{\ell+1} < 3\sum_{i\in S}m_i + 2 m_{k+1} \leq 3\sum_{i\in S}m_i + 2 v(i^*).\]

Together with Lemma~\ref{submodular reduction}, we can bound the optimal solution as
\begin{equation}\label{eqn:opt-upper}
opt(A)\le\frac{e}{e-1} \Big( 3 \Gmodular(A,B/2) + 2 v(i^*) \Big).
\end{equation}
Therefore, the expected value of our randomized mechanism is
$\frac{3}{5} \Gmodular(A,B/2) + \frac{2}{5} v(i^*)\ge \frac{e-1}{5e}opt.$

%
%

\subsection{Derandomization}

In this section, we provide a deterministic truthful mechanism which is budget feasible and has constant approximation ratio (where $opt(A\setminus \{i^*\},B)$ denotes the value of the optimal solution for the weighted submodular maximization problem given instance $A\setminus \{i^*\}$ with budget $B$).

\begin{center}
\small{}\tt{} \fbox{
\parbox{4.0in}{
\hspace{0.05in} \\[-0.05in] \Mmodular
\begin{enumerate}
\item Let $A=\{i~|~c_i\le B\}$ and $i^*\in \arg\max_{i\in A} v(i)$
\item If $\frac{1 + 4 e + \sqrt{1 + 24 e^2}}{2 (e-1)}\cdot v(i^*) \ge opt(A\setminus \{i^*\},B)$,\footnote{} return $i^*$
\item Otherwise, return $\Gmodular(A,B/2)$
\end{enumerate}
}}
\end{center}

\footnotetext[3]{Our deterministic mechanism works in general not in polynomial time because of the hardness of computing an optimal solution for submodular maximization problems. However, we may substitute it by the optimum of the fractional problem; therefore for special problems like knapsack (discussed in the following subsection), we can get a polynomial time deterministic mechanism.
Note however that we cannot replace it by the simple greedy solution as it breaks monotonicity.
Our mechanism suggests a natural question on the power of computation in (budget feasible) mechanism design at the price of being truthful~\cite{PSS08,Dob}. In particular, can an (exponential runtime) mechanism beat the lower bound of all polynomial time mechanisms? We leave this as future work.}

\begin{theorem}
\Mmodular \ is a  budget feasible truthful mechanism for submodular functions with approximation ratio $\frac{6 e -1 + \sqrt{1 + 24 e^2}}{2 (e-1)} (\approx 8.34)$.
\end{theorem}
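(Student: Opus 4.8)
The plan is to verify the three ingredients of the theorem --- universal truthfulness, budget feasibility, and the approximation bound --- and to combine them exactly as was done for \RMmodular. Throughout, write $\alpha=\frac{1+4e+\sqrt{1+24e^2}}{2(e-1)}$ for the constant used in Step~2 and $\beta=\frac{6e-1+\sqrt{1+24e^2}}{2(e-1)}$ for the claimed ratio. The whole argument is organized around one algebraic fact, which I would check at the outset: $\alpha$ is the positive root of $(e-1)x^2-(4e+1)x-2e=0$, and from this one reads off $\beta=\alpha+1$ and $\frac{3e\alpha}{\alpha(e-1)-2e}=\alpha+1=\beta$, so that the constants coming out of the two branches of the mechanism match up cleanly.

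For truthfulness it suffices, since the setting is single-parameter, to show the allocation is monotone and invoke~\cite{myerson}; the payments are then forced to be the threshold bids. The key observation is that the branch chosen in Step~2 depends only on $v(i^*)$, which is public, and on $opt(A\setminus\{i^*\},B)$, a function of the bids of the agents other than $i^*$; moreover decreasing a bid can only (weakly) increase $opt(A\setminus\{i^*\},B)$, hence can only move the mechanism from the ``$i^*$ branch'' toward the ``$\Gmodular$ branch''. With this in hand: if $i^*$ is returned, its winning condition does not involve its own bid (as long as it stays in $A$), so it keeps winning on decreasing its bid; if $\Gmodular(A,B/2)$ is returned and $j$ is one of its winners, decreasing $j$'s bid keeps us in the $\Gmodular$ branch (by the monotonicity just noted, or trivially if $j=i^*$) and $j$ stays a greedy winner because $\Gmodular$ is monotone, as established in the analysis of \RMmodular; the symmetric ``a loser raising its bid stays a loser'' follows likewise, so the allocation is monotone. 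Budget feasibility then follows from the payment bounds already at our disposal: if $i^*$ is returned, its threshold is $B$ (it wins for any bid up to $B$, and a bid above $B$ removes it from $A$), so the total payment is $B\le B$; if $\Gmodular(A,B/2)$ is returned with winning set $S=S_k$, the threshold of each $j\in S$ inside \Mmodular\ is at most its threshold inside the stand-alone greedy (raising $j$'s bid can only eject it from the greedy set or flip the mechanism to the $i^*$ branch, where $j$ need not win), so Lemma~\ref{lemma-pay-upper} bounds its payment by $m_{S_{j-1}}(j)\frac{B}{v(S_k)}$, and summing over $j\in S$ (using $\sum_{j\in S}m_{S_{j-1}}(j)=v(S_k)$) gives total payment at most $B$.

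For the approximation ratio, abbreviate $opt=opt(A,B)$ and distinguish the two branches. In the $i^*$ branch we have $\alpha\,v(i^*)\ge opt(A\setminus\{i^*\},B)$, and for any optimal set $T$ for $(A,B)$ submodularity gives $v(T)\le v(T\setminus\{i^*\})+m_{\emptyset}(i^*)\le opt(A\setminus\{i^*\},B)+v(i^*)\le(\alpha+1)\,v(i^*)=\beta\,v(i^*)$, so the returned value $v(i^*)$ is within $\beta$ of $opt$. In the $\Gmodular$ branch, write $g=\Gmodular(A,B/2)$; inequality~\eqref{eqn:opt-upper} (a consequence of Lemma~\ref{submodular reduction}) gives $opt\le\frac{e}{e-1}\big(3g+2v(i^*)\big)$, and the branch condition $v(i^*)<\tfrac1\alpha\,opt(A\setminus\{i^*\},B)\le\tfrac1\alpha\,opt$ turns this into $opt\le\frac{e}{e-1}\big(3g+\tfrac2\alpha opt\big)$, i.e.\ $opt\le\frac{3e\alpha}{\alpha(e-1)-2e}\,g=(\alpha+1)g=\beta\,g$. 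Both branches therefore return a value at least $opt/\beta$, and substituting the value of $e$ yields the numerical ratio $\approx 8.34$.

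I expect the delicate part to be the truthfulness argument: one must be sure that an agent cannot profit by manipulating \emph{which} sub-mechanism is executed. The argument rests entirely on $v(i^*)$ being public and on $opt(A\setminus\{i^*\},B)$ responding monotonically to each bid in the direction that is ``safe'' for a winner; making this airtight requires care in the boundary cases where $i^*$ itself lies in the greedy set, or where changing a bid adds an agent to, or removes one from, $A$. Everything else is a direct appeal to Lemma~\ref{lemma-pay-upper} and inequality~\eqref{eqn:opt-upper}, together with the elementary verification that $\alpha$ solves $(e-1)x^2-(4e+1)x-2e=0$.
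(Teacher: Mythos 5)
Your proposal is correct and follows essentially the same route as the paper: truthfulness from monotonicity plus the fact that the Step~2 test depends only on the public $v(i^*)$ and on $opt(A\setminus\{i^*\},B)$ (which responds to bids in the safe direction), budget feasibility from Lemma~\ref{lemma-pay-upper} together with the observation that Step~2 only adds upper bounds on winners' thresholds, and the ratio from the two-branch case analysis using inequality~(\ref{eqn:opt-upper}) and the defining identity for the constant $x$. Your algebra (characterizing $\alpha$ as the positive root of $(e-1)x^2-(4e+1)x-2e=0$ and substituting $v(i^*)<opt/\alpha$ directly) is just a repackaging of the paper's ``simple calculations,'' so the two proofs coincide in substance.
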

\begin{proof}
Note that the bid of $i^*$ is independent to the value of $opt(A\setminus \{i^*\},B)$. Therefore, the mechanism is truthful (a detailed similar argument is given in the proof of Theorem~\ref{th_M_knapsack} in Appendix~\ref{appendix-knapsack}). Budget feasibility follows from Lemma~\ref{lemma-pay-upper} and the observation that Step~2 only gives additional upper bounds on the thresholds of winners from $\Gmodular(A,B/2)$.

It remains to prove the approximation ratio. Let $x=\frac{1 + 4 e + \sqrt{1 + 24 e^2}}{2 (e-1)}(\approx 7.34)$.
We observe that $opt(A,B) - v(i^*) \le opt(A\setminus\{i^*\},B) \le opt(A,B).$

If the condition in Step~2 holds and the mechanism outputs $i^*$, then
\[opt(A,B) \le opt(A\setminus\{i^*\},B)+v(i^*) \le (x+1)\cdot v(i^*).\]
Otherwise, the condition in Step~2 fails and the mechanism outputs $\Gmodular(A,B/2)$ in Step~3.
Recall that in formula~(\ref{eqn:opt-upper}),
$opt(A,B)\leq \frac{e}{e-1} \Big( 3 \Gmodular(A,B/2) + 2 v(i^*) \Big).$
We have
\[ x \cdot v(i^*) < opt(A\setminus \{i^*\},B)\le opt(A,B) \leq \frac{e}{e-1} \Big( 3 \Gmodular(A,B/2) + 2 v(i^*) \Big). \]
This implies that
$ v(i^*) \leq \frac{3 e}{x(e-1)-2e} \Gmodular(A,B/2).$
Hence,
\[ opt \leq \frac{e}{e-1} \Big( 3 \Gmodular(A,B/2) + 2 v(i^*) \Big) \le \frac{e}{e-1} \left( 3  + \frac{6 e}{x(e-1)-2e} \right)\cdot \Gmodular(A,B/2).\]

Simple calculations show that
$1+x= \frac{6 e -1 + \sqrt{1 + 24 e^2}}{2 (e-1)}=\frac{e}{e-1} \left( 3  + \frac{6 e}{x(e-1)-2e} \right).$
Therefore, we have $opt \le (x+1)\cdot \Gmodular(A,B/2)$ in the both cases, which concludes the  proof of the claim with approximation ratio
$\frac{e}{e-1} \left( 3  + \frac{6 e}{x(e-1)-2e} \right) (\approx 8.34)$.
\end{proof}

\subsection{Improved Mechanisms for Knapsack}

In this subsection, we consider a special model of submodular functions where the valuations of agents are additive, i.e., $v(S)=\sum_{i\in S}v_i$ for $S\subseteq [n]$. This leads to an instance of the Knapsack problem, where items correspond to agents and the size of the knapsack corresponds to budget $B$. Singer~\cite{PS10} give a 6-approximation deterministic mechanism. By applying approaches from the previous subsections, we have the following results (proofs are deferred to Appendix~\ref{appendix-knapsack}).

\begin{theorem}\label{theorem-knapsack-mechanism}
There are $2+\sqrt{2}$ approximation deterministic and 3 approximation randomized polynomial truthful budget feasible mechanisms for knapsack.
\end{theorem}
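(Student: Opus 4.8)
The plan is to mirror the submodular analysis of Sections~3.1--3.3 in the additive (knapsack) setting, where everything becomes cleaner because $m_S(i)=v_i$ is independent of $S$. First I would set up the knapsack analogues of the two mechanisms: a randomized one $\RMknapsack$ that with probability $p$ returns the single most valuable feasible item $i^*$ and with probability $1-p$ returns $\gre(A,B/2)$ (the greedy-by-density scheme run with budget $B/2$), and a deterministic one $\Mknapsack$ that compares $\alpha\cdot v(i^*)$ against $opt(A\setminus\{i^*\},B)$ for an appropriate threshold $\alpha$ and returns whichever side wins. The key point enabling the deterministic version is that $opt$ of a knapsack instance is computable in polynomial time via standard dynamic programming (or we can afford the exact optimum here), so unlike the general submodular case the mechanism runs in polynomial time; and dropping $i^*$ from the market makes its bid independent of the comparison, preserving monotonicity.

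Next I would carry out the three-part verification for each mechanism. \emph{Truthfulness:} returning $i^*$ is trivially monotone; $\gre(A,B/2)$ is monotone since an item outside the winning set cannot win by raising its bid (its density only drops); and in $\Mknapsack$ the choice in Step~2 does not depend on $i^*$'s own bid, so $i^*$'s allocation is monotone, and the winners of $\gre$ only see extra threshold caps, which cannot break monotonicity. \emph{Budget feasibility:} the payment to $i^*$ is $B$; for $\gre(A,B/2)$ I would invoke the knapsack specialization of Lemma~\ref{lemma-pay-upper}, i.e.\ the threshold payment of winner $j$ is at most $v_j\cdot B/v(S)$, so the total payment is at most $B$ --- here the proof simplifies considerably because marginal contributions are constant, so the chain of density inequalities and the bound $c(S_k)\le B/2$ go through directly. \emph{Approximation:} by Lemma~\ref{submodular reduction}, the fractional greedy value $fgre(A)\ge (1-1/e)\,opt(A)$, but in the additive case the greedy-with-fractional-last-item argument actually gives the exact bound $fgre(A)\ge opt(A)$ (the fractional knapsack optimum dominates the integral one), so I would use that stronger fact. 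Then, exactly as in Section~3.3, the stopping condition of $\gre(A,B/2)$ at item $k+1$ yields $fgre(A) < 3\sum_{i\in S}v_i + 2 v_{k+1} \le 3\,\gre(A,B/2) + 2 v(i^*)$, hence $opt(A) \le 3\,\gre(A,B/2) + 2 v(i^*)$ (no $e/(e-1)$ factor). Choosing $p=\tfrac13$ gives expected value $\tfrac13\gre + \tfrac23 v(i^*)\ge \tfrac13 opt$, i.e.\ ratio $3$; optimizing the threshold $\alpha$ in $\Mknapsack$ against the relation $opt \le 3\,\gre(A,B/2)+2v(i^*)$ and $opt \le opt(A\setminus\{i^*\},B)+v(i^*)$ gives, by the same algebra as in the proof of Theorem~3.2 but with the constant $3$ replacing $\tfrac{e}{e-1}\cdot 3$ and $2$ replacing $\tfrac{e}{e-1}\cdot 2$, the bound $2+\sqrt2$.

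I expect the main obstacle to be the budget-feasibility bound for $\gre(A,B/2)$ --- that is, proving the knapsack analogue of Lemma~\ref{lemma-pay-upper} that winner $j$'s threshold is at most $v_j B/v(S)$. Although the additivity makes the marginal-contribution bookkeeping trivial, one still has to argue carefully that when item $j$ raises its bid it cannot simultaneously stay in the winning set and force a drastically different (smaller-value) output set; the subtlety is the same as in the submodular proof, namely ruling out the case where $j$'s raised bid lets a cheaper configuration $S\cup\{j\}$ with $v(S\cup\{j\})$ much smaller than $v(S_k)$ become the greedy output, and the resolution is the Lemma~\ref{submodular average}-style density argument showing $v(S_k) < 2v(S\cup\{j\})$, leading to a contradiction with $b_j > v_j B/v(S_k)$. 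Everything else --- the approximation counting and the threshold optimization for the deterministic mechanism --- is routine given the tools already developed in Section~3. I would present the full details of these knapsack versions in Appendix~\ref{appendix-knapsack}, stating the knapsack greedy scheme $\gre$, the mechanisms $\RMknapsack$ and $\Mknapsack$, and the specialized payment lemma explicitly there.
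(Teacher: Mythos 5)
There is a genuine gap, and it is in the constants. Your plan transplants the submodular machinery verbatim, running the knapsack greedy with budget $B/2$ and invoking (the additive case of) Lemma~\ref{lemma-pay-upper} for budget feasibility. That is sound as far as it goes, but the half-budget greedy only yields the bound $fopt(A) < 3\sum_{i\in S}v_i + 2\,v(i^*)$, and from $opt \le 3\,G + 2\,V$ (writing $G$ for the greedy value and $V=v(i^*)$) your stated conclusions do not follow: the mixture you propose gives expected value $\tfrac13 G + \tfrac23 V$, while $\tfrac13\,opt$ can be as large as $G + \tfrac23 V$, so the inequality $\tfrac13 G+\tfrac23 V\ge \tfrac13\,opt$ is simply false; the best ratio any randomized mixture can extract from a bound $aG+bV$ is $a+b$, i.e.\ $5$ here, and redoing the threshold optimization of Theorem~3.2 with constants $(3,2)$ gives $x=2+\sqrt6$ and ratio $3+\sqrt6\approx 5.45$, not $2+\sqrt2$. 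The paper gets the claimed constants by running the greedy with the \emph{full} budget $B$ (stopping rule $c_k\le B\cdot v_k/\sum_{i\in S\cup\{k\}}v_i$), which in the additive case yields the stronger bound $fopt(A) < 2\sum_{i\in S}v_i + v_{i^*}$; with constants $(2,1)$ the $\tfrac13/\tfrac23$ mixture gives ratio $3$ and the threshold $1+\sqrt2$ gives ratio $2+\sqrt2$. Crucially, budget feasibility of the full-budget greedy cannot be inherited from Lemma~\ref{lemma-pay-upper} (whose proof hinges on the $B/2$ slack, $c(S_k)\le B/2$); the paper proves it by a different, knapsack-specific argument: a winner of $\gre(A)$ cannot change the winning set while remaining a winner, which gives the explicit threshold payment $\min\bigl\{v_i\cdot \tfrac{c_{k+1}}{v_{k+1}},\ B\cdot \tfrac{v_i}{\sum_{j\in S}v_j},\ q_i\bigr\}$ and hence total payment at most $B$. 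This missing ingredient is exactly what lets the mechanism spend the whole budget in the stopping rule and still be budget feasible.

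A secondary point: for \Mknapsack\ you propose comparing $\alpha\,v(i^*)$ with the exact integral optimum $opt(A\setminus\{i^*\},B)$ computed "by standard dynamic programming", but knapsack is NP-hard and the DP is only pseudo-polynomial, so this does not give the polynomial-time mechanism claimed in the theorem. The paper compares against the \emph{fractional} optimum $fopt(A\setminus\{i^*\})$, which is computable in polynomial time, is independent of $i^*$'s bid (so monotonicity is preserved), and upper-bounds the integral optimum, so the same algebra goes through. With these two corrections --- full-budget greedy plus its separate budget-feasibility argument, and the fractional benchmark --- your outline matches the paper's proof.
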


\section{Lower Bounds}

In this section we focus on lower bounds for the approximation ratio of truthful budget feasible mechanisms for knapsack. Note that the same lower bounds can be applied to the general submodular functions as well. In~\cite{PS10}, a lower bound of $2$ is obtained by the following argument: Consider the case with two items, both of unit value
(the value of two items together is 2). If their costs are $(B-\epsilon, B-\epsilon)$, at least one item should win,
otherwise the approximation ratio is infinite. Without loss of generality, we can assume that the first item wins, then its payment is
at least $B-\epsilon$. Now consider another profile $(\epsilon, B-\epsilon)$, the first item should also win and get payment at least
 $B-\epsilon$ by truthfulness. Then the second item could not win because of the budget constraint and individual rationality. Therefore, the mechanism
can only archieve value $1$ for that instance while the optimal solution is $2$. This gives us the lower bound of $2$.

We improve the deterministic lower bound to $1+\sqrt{2}$ by a more involved proof. We also adduce a lower bound of $2$ for
universally randomized truthful mechanisms. All our lower bounds are unconditional, which implies that we do not impose any complexity assumption and constraints of the running time on the mechanism. Our lower bounds relys only on truthfulness and budget feasibility.


\subsection{Deterministic Lower Bound}

\begin{theorem}\label{theorem-knapsack-lower}
There is no truthful budget feasible mechanism that
can achieve an approximation ratio better than $1+\sqrt{2}$, even if
there are only three items.
\end{theorem}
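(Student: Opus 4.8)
The plan is to follow the two-step strategy the authors outline in the introduction: first extract a structural consequence of having a good approximation ratio, valid for \emph{every} instance, and then exhibit one carefully chosen three-item instance on which that consequence collides with budget feasibility. Concretely, I would first prove an auxiliary lemma (this is the ``Lemma~\ref{lemma-knapsack-lower}'' referred to in the text): suppose $\mathcal{M}$ is a truthful budget feasible mechanism with approximation ratio $r < 1+\sqrt 2$. Consider instances with two distinguished ``small'' items of modest value together with one ``large'' item whose value $V$ is chosen large relative to the small ones. Because $r<1+\sqrt2<\infty$, on any profile where only the large item is affordable the mechanism must take it; and on profiles where the large item is too expensive but at least one small item would give a decent fraction of $\mathrm{opt}$, the mechanism is forced to select enough small items. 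Turning these ``forced to win'' statements into lower bounds on \emph{threshold payments} (via Myerson: the payment to a winner is its threshold bid) is the heart of the lemma.

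Next I would set up the decisive instance. Take three items; two of them, say items $1$ and $2$, with equal value, and item $3$ with a much larger value, and tune the costs around the budget $B$ so that: (i) when all three costs are near $B$, only one item (necessarily, after symmetry-breaking, item $1$ or $2$, or possibly item $3$) can be bought, forcing a large threshold payment on whichever small item wins; (ii) by truthfulness, lowering that winner's cost to nearly $0$ keeps it winning with the same high threshold payment, which then (by budget feasibility) blocks the other affordable items; (iii) the value the mechanism can collect on this profile is then too small compared with $\mathrm{opt}$ unless $r\ge 1+\sqrt2$. The constant $1+\sqrt2$ should drop out of optimizing the ratio between the two competing bad events — one where the mechanism keeps only a single small item, and one where it is forced off the large item — and balancing these two gives a quadratic whose relevant root is $1+\sqrt2$. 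I would pin the costs and values to symbolic parameters, write the two inequalities that ``$r$ beats both bad cases'' would require, and derive a contradiction when $r<1+\sqrt2$.

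I expect the main obstacle to be step (ii): controlling how the winning set can change as a single agent lowers its bid while \emph{other} bids are held fixed. Monotonicity guarantees the agent stays a winner, but the mechanism may reshuffle the other winners, and I must ensure the threshold payment I extracted in the lemma genuinely transfers to the new profile and genuinely crowds out the items I want crowded out — this is exactly the subtlety the authors flag (``we still need to be careful about the agents in the remaining market''). Handling it cleanly will likely require choosing the three values so that there is essentially no room for such reshuffling (e.g., item $3$'s value so dominant that any solution omitting it is forced, and the two small items so symmetric that the only freedom is which of them wins). Once the instance is rigid in this sense, the rest is the arithmetic of balancing the two failure modes to produce the bound $1+\sqrt2$, which I would not grind through here.
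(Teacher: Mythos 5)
Your high-level plan (a structural lemma valid for all instances, then one rigid three-item instance where it collides with budget feasibility, with $v_1=\sqrt2$, $v_2=v_3=1$ arising from balancing two failure modes) matches the paper's intent, but the engine of the argument is missing and the steps you do spell out would not get past the known factor-$2$ bound. The crucial ingredient in the paper is an \emph{upper} bound on the threshold of the high-value item: Lemma~\ref{lemma-knapsack-lower} shows that for any $c_3>0.5$ and any interval $(a,b)\subset(0,1-c_3)$ there is some $c_2\in(a,b)$ with $p_1(c_2,c_3)<1-c_2$. Its proof is a perturbation/density argument: if $p_1\ge 1-c_2$ on the whole interval, then on a profile where all three items are affordable item 1 wins, its payment exceeds $0.5$ and crowds out item 3, so item 2 must win with payment \emph{exactly} $c_2$ (individual rationality plus budget feasibility leave no slack); moving $c_2$ slightly within the interval then contradicts the threshold characterization of payments. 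Your sketch only extracts \emph{lower} bounds on thresholds from ``forced to win'' situations, and nothing in it produces the needed upper bound on $p_1$; without it you cannot build a profile where the valuable item loses even though it is cheap enough to belong to the optimum, which is what forces the mechanism to buy \emph{both} unit items and sets up the final contradiction.

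Concretely, your step (i) fails as stated: when all three costs are near $B$, the mechanism can meet any ratio better than $1+\sqrt2$ by taking the $\sqrt2$-value item alone, so no large threshold on a small item is forced. Your step (ii) (lower a winner's cost, keep its high threshold, crowd out the others) is essentially Singer's argument and caps out at ratio $2$. The paper instead applies Lemma~\ref{lemma-knapsack-lower} twice (once with items 2 and 3 swapped) to find costs $c,d$ and then a cost $c_1$ exceeding both thresholds $p_1(c,0.7)$ and $p_1(0.7,d)$; on the profiles $(c_1,c,0.7)$ and $(c_1,0.7,d)$ item 1 loses while the optimum has value $1+\sqrt2$, so beating $1+\sqrt2$ forces both unit items to win, giving $p_3(c_1,c)\ge 0.7$ and $p_2(c_1,d)\ge 0.7$; on $(c_1,c,d)$ both unit items then win and the payments sum to at least $1.4>B=1$. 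Also note a tension in your setup: you propose making the large item's value ``dominant,'' but the bound $1+\sqrt2$ requires it to be exactly $\sqrt2$, i.e.\ comparable to the unit items, precisely so that both bad cases bind simultaneously.
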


Assume otherwise that there is a budget feasible truthful mechanism
that can achieve a ratio better than $1+\sqrt{2}$. We consider the
following scenario: budget $B=1$, and values $v_1=\sqrt{2}$,
$v_2=v_3=1$. Then the mechanism on a scenario has the following two
properties: (i) if all items are winners in the optimal solution,
the mechanism must output at least two items; and (ii) if $\{1,2\}$ or $\{1,3\}$ is the optimal solution, the mechanism cannot output either $\{2\}$ or $\{3\}$ (i.e., a single item with
unit value).
For any item $i$, let function $p_i(c_j,c_k)$ be the payment offered
to item $i$ given that the bids of the other two items are $c_j$ and
$c_k$. That is, $p_i(c_j,c_k)$ is the threshold bid of $i$ to be a winner.

\begin{lemma}\label{lemma-knapsack-lower}
For any $c_3>0.5$ and any domain $(a,b)\subset (0,1-c_3)$, there is
$c_2\in (a,b)$ such that $p_1(c_2,c_3) < 1-c_2$.
\end{lemma}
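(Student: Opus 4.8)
The plan is to argue by contradiction at the level of a single profile and exploit the two structural properties (i) and (ii) together with truthfulness (monotonicity plus threshold payments). Fix $c_3 > 0.5$ and a target interval $(a,b) \subset (0, 1-c_3)$. Suppose, for contradiction, that for every $c_2 \in (a,b)$ we have $p_1(c_2,c_3) \ge 1-c_2$. The intuition is that if item $1$'s threshold is this large on a whole interval of profiles, then the mechanism is committed to ``needing'' item $1$ quite badly, but then it cannot afford to also buy a second unit-value item when item $1$'s true cost is small, which will collide with property (i) or (ii).

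Concretely, first I would examine what happens when item $1$ bids some small $c_1$ with $c_2 \in (a,b)$ and $c_3 > 0.5$ fixed. Since $c_1 + c_2 < 1 = B$ and all three items have positive value, the optimal solution contains at least the pair $\{1,2\}$ (indeed all three items might even fit, or at least $\{1,2\}$ does), so by property (i) the mechanism must output at least two items, and by property (ii) it cannot be $\{2\}$ or $\{3\}$ alone; combined with the need for two winners, item $1$ must be among the winners for all small enough $c_1$. Now use monotonicity: if $1$ wins at $(c_1,c_2,c_3)$ for some $c_1$, then $1$ wins at $(c_1',c_2,c_3)$ for all $c_1' \le c_1$, and $p_1(c_2,c_3)$ is exactly the supremum of bids at which $1$ wins. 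The supposition $p_1(c_2,c_3) \ge 1-c_2$ says $1$ keeps winning essentially up to bid $1-c_2$. The key step is then to pick a witnessing $c_1$ close to $1-c_2$ (so $c_1 + c_2$ close to $1$): at such a profile $1$ is a winner receiving payment at least its threshold, which is $\ge 1-c_2$, hence the total payment already nearly exhausts $B$; but we also argued there must be a \emph{second} winner, whose payment is positive and at least its bid, forcing $\sum_i p_i > B$ and violating budget feasibility — unless the second winner's cost is $0$, which we can rule out by a suitable choice of the instance (e.g. taking $c_3$, or the relevant competing bid, bounded away from $0$, which is exactly why the hypothesis restricts to $c_3 > 0.5$ and an interval $(a,b)$ bounded away from the endpoints).

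The one delicate point is handling the payment to the second winner when that winner is item $3$: its threshold payment $p_3(c_1,c_2)$ could in principle be small even though $c_3 > 0.5$, because the threshold is about what $3$ \emph{can} bid, not what it does bid — but individual rationality already gives $p_3 \ge c_3 > 0.5$, so the total is at least $(1-c_2) + 0.5 > 1$ once $c_2 < 0.5$, which holds since $c_2 \in (a,b) \subset (0,1-c_3) \subset (0,0.5)$. If instead the second winner is item $2$, then since $2$ wins and $1$ wins at this profile and $p_1 \ge 1-c_2$, we get $p_1 + p_2 \ge (1-c_2) + c_2 = 1 = B$, and we need a strict violation; this is where we exploit that we took $c_1$ with $c_1+c_2$ strictly less than $1$ but $1$ winning with payment $> c_1$ could be strictly larger, or more cleanly, we perturb to get strictness, or we observe $p_2 \ge c_2$ is not tight enough and instead note the optimal actually fits all three items so property forces a configuration that cannot be paid for.

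I expect the main obstacle to be precisely this book-keeping of strict versus non-strict inequalities and the case analysis on \emph{which} item is the second winner: one must choose the witnessing $c_1$ and possibly shrink the interval $(a,b)$ so that in every case the total payment strictly exceeds $B$. The restriction $c_3 > 0.5$ is doing real work here (it guarantees $c_2 < 0.5$, so a payment of $1-c_2 > 0.5$ to item $1$ plus anything positive to a second winner breaks the budget), and the statement's ``there exists $c_2 \in (a,b)$'' rather than ``for all'' is what lets us, after the contradiction, conclude only the negation of the universal hypothesis — so in fact the cleanest writeup assumes $p_1(c_2,c_3)\ge 1-c_2$ for all $c_2\in(a,b)$, derives a budget violation from a single well-chosen profile, and is done. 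I would present it in that contrapositive form.
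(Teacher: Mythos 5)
There is a genuine gap, and it sits exactly at the case you flag but do not resolve: the second winner being item~2. At a profile $(c_1,c_2,c_3)$ where all three items fit, your hypothesis together with truthfulness and IR only yields $p_1\ge 1-c_2$ and $p_2\ge c_2$, which is perfectly consistent with budget feasibility ($p_1=1-c_2$, $p_2=c_2$ sums to exactly $B=1$). So no single-profile budget violation exists in this case, and your two proposed escapes do not repair it. Choosing the witnessing $c_1$ close to $1-c_2$ destroys the very property you need: then $c_1+c_2+c_3>1$, so property~(i) (which requires all three items to fit) no longer applies, the optimum is $\{1,2\}$ with value $1+\sqrt{2}$, and the mechanism may legally output $\{1\}$ alone (ratio $(1+\sqrt{2})/\sqrt{2}\approx 1.71<1+\sqrt{2}$) and pay it up to $B$ --- no second winner is forced, hence no overspending. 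The suggestion to ``perturb to get strictness'' is not developed, and the claim that ``the optimal actually fits all three items so [the] property forces a configuration that cannot be paid for'' is false, since the configuration $\{1,2\}$ with payments $(1-c_2,\,c_2)$ can be paid for. (A smaller issue: item~1 being a winner does not follow from properties~(i)--(ii), since outputting $\{2,3\}$ is ratio-feasible; it follows, as you note elsewhere, from $c_1$ being below the threshold $p_1(c_2,c_3)\ge 1-c_2$.)

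The missing idea is a cross-profile argument exploiting that the threshold $p_2(c_1,c_3)$ does not depend on item~2's own bid. The paper fixes $c_1=1-c_3-b$, so that for \emph{every} $c_2\in(a,b)$ all three items fit, item~1 wins (its bid is below its threshold $\ge 1-c_2$), item~3 cannot win ($p_1>0.5$ and $p_3\ge c_3>0.5$ would exceed $B$), hence item~2 wins; then budget feasibility forces $p_2(c_1,c_3)\le B-p_1\le c_2$ while IR forces $p_2(c_1,c_3)\ge c_2$, pinning $p_2(c_1,c_3)=c_2$ exactly. Running the same argument with a different cost $c_2'=\frac{c_2+b}{2}\in(a,b)$ pins the same quantity to $c_2'>c_2$, a contradiction --- the contradiction is with the bid-independence of the threshold, not with the budget at any one profile. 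Your writeup never makes this move, and without it the argument cannot close.
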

\begin{proof}
Assume otherwise that there are $c_3>0.5$ and domain $(a,b)\subset
(0,1-c_3)$ such that for any $c_2\in (a,b)$, $p_1(c_2,c_3) \ge
1-c_2$. Let $c_1=1-c_3-b$, then $c_1+c_2+c_3<1=B$, which implies
that the mechanism has to output at least two items. Since
$c_1=1-c_3-b<1-c_2\le p_1(c_2,c_3)$, item 1 is a winner. Further,
$p_1(c_2,c_3) \ge 1-c_2 > 0.5$, which together with budget feasibility imply that item 3 cannot be a winner. Therefore, item 2 must be a
winner with payment $p_2(c_1,c_3) = c_2$ due to individual
rationality and budget feasibility. The same analysis still holds if
the true cost of item 2 becomes $c'_2=\frac{c_2+b}{2}$, i.e., item 2
is still a winner with payment $c'_2$. Thus for the sample $(c_1,c_2,c_3)$ the payment $p_2(c_1,c_3) \ge c'_2 > c_2$, a contradiction.
\end{proof}

Since item 2 and 3 are identical, the above lemma still holds if we
switch item 2 and 3 in the claim. We are now ready to prove
Theorem~\ref{theorem-knapsack-lower}.

\bigskip \noindent {\em Proof of
Theorem~\ref{theorem-knapsack-lower}.} Define $c_3=0.7$ and
$(a,b)=(0.2,0.3)$. Note that $c_3$ and $(a,b)$ satisfy the condition
of Lemma~\ref{lemma-knapsack-lower}. Hence, there is $c\in
(0.2,0.3)$ such that $p_1(c,0.7)<1-c$. Define
$p_1(c,0.7)= 1-c-x, \ \textup{where} \ x>0.$
Symmetrically, define $c_2=0.7$ and $(a',b')=(c,\min\{0.3,c+x\})$.
Again by Lemma~\ref{lemma-knapsack-lower}, there is $d\in (a',b')$
such that $p_1(0.7,d)<1-d$. Define
$p_1(0.7,d)=1-d-y, \ \textup{where} \ y>0.$
Pick $c_1=1-d-\epsilon$, where $\epsilon>0$ is sufficiently small so
that $c_1\in (1-c-x,1-c)\cap (1-d-y,1-d)$. Note that since $d\in
(c,c+x)$, $c_1$ is well-defined.

Consider a true cost vector $(c_1,c,0.7)$. Since
$p_1(c,0.7)=1-c-x<c_1$, item 1 cannot be a winner. Since
$c_1+c=1-d-\epsilon+c<1$, the optimal solution has value at least
$v_1+v_2=1+\sqrt{2}$; therefore the mechanism has to output both
item 2 and 3. Hence, $p_3(c_1,c)\ge c_3 = 0.7.$

Similarly, consider true cost vector $(c_1,0.7,d)$; we have
$p_2(c_1,d)\ge c_2=0.7.$
Finally, consider cost vector $(c_1,c,d)$. By the above two
inequalities, both items 2 and 3 are the winners; this contradicts the
budget feasibility. \hfill $\square$

\subsection{Randomized Lower Bound}

\begin{theorem}\label{theorem-knapsack-lower-random}
There is no randomized (universally) truthful budget feasible mechanism that
can achieve an approximation ratio better than $2$, even in case of
two items.
\end{theorem}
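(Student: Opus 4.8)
\emph{Proof proposal.} The plan is to beat any randomized mechanism not with a few discrete instances (as in the deterministic bound above) but with an entire one–parameter family of instances on which the mechanism must do uniformly well, while showing that every single deterministic truthful budget feasible mechanism can be good on only a negligible part of this family; averaging then yields the contradiction. Concretely, I would take two items with $v_1=v_2=1$ and budget $B=1$ (a knapsack, hence submodular, instance) and, for $t\in[0,1]$, consider the cost profile in which item $1$ bids $t$ and item $2$ bids $1-t$. Since $t+(1-t)=1=B$ we can afford both items, so $opt(t,1-t)=v_1+v_2=2$ for every $t$; hence a mechanism with ratio better than $2$ must satisfy $\mathcal{M}(t,1-t)>1$ for every $t\in[0,1]$, where $\mathcal{M}(t,1-t)=\mathbb{E}_M[\mathrm{val}_M(t,1-t)]$ is the expectation over the deterministic truthful budget feasible mechanisms $M$ that $\mathcal{M}$ randomizes over.

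The crux is the following claim about any fixed $M$ in the support: the set $Z_M=\{t\in[0,1]:\ \text{both items win under }M\text{ on }(t,1-t)\}$ has measure zero (in fact $|Z_M|\le 1$). For this I would use the single–parameter payment characterization~\cite{myerson}: let $\theta^M_i(\cdot)$ be the threshold payment of item $i$ as a function of the other item's bid (so $\theta^M_i\le B$ by budget feasibility). If $t\in Z_M$, then item $1$ winning on $(t,1-t)$ forces $\theta^M_1(1-t)\ge t$, item $2$ winning forces $\theta^M_2(t)\ge 1-t$, and budget feasibility on that profile forces $\theta^M_1(1-t)+\theta^M_2(t)\le 1$; adding the two lower bounds shows all three inequalities are tight, so $\theta^M_1(1-t)=t$ and $\theta^M_2(t)=1-t$. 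Now if $t_1<t_2$ both belonged to $Z_M$, consider the profile in which item $1$ bids $t_1$ and item $2$ bids $1-t_2$: item $1$ bids $t_1<t_2=\theta^M_1(1-t_2)$, so it wins and is paid $t_2$; item $2$ bids $1-t_2<1-t_1=\theta^M_2(t_1)$, so it wins and is paid $1-t_1$; the total payment is $t_2+(1-t_1)=1+(t_2-t_1)>B$, contradicting budget feasibility of $M$. Hence $Z_M$ has at most one point.

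To finish, since $\mathrm{val}_M(t,1-t)\in\{0,1,2\}$ equals $2$ only for $t\in Z_M$, we get $\int_0^1\mathrm{val}_M(t,1-t)\,dt\le 1$ for every $M$; taking expectation over $M$ (Tonelli applies, the integrand being nonnegative and bounded) gives $\int_0^1\mathcal{M}(t,1-t)\,dt\le 1$, which contradicts $\int_0^1\mathcal{M}(t,1-t)\,dt>\int_0^1 1\,dt=1$. Thus no randomized truthful budget feasible mechanism can have ratio better than $2$, even with two items; the same instance viewed as a submodular instance gives the bound for general submodular valuations, and since only truthfulness and the budget constraint were used, the bound is unconditional. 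The one delicate step is the measure–zero claim — really the observation that budget feasibility pins \emph{both} threshold payments to their extreme values along this diagonal family, so two such ``tight'' profiles splice into one over–budget profile; the computation of $opt$, the Fubini/averaging step, and the reduction to submodular functions are routine.
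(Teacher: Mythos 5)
Your proposal is correct in substance and follows the same skeleton as the paper's proof: an averaging (Yao-type) argument over a one-parameter family of two-item, unit-value instances whose costs sum to the budget, combined with the key claim that any deterministic truthful budget-feasible mechanism can have both items win on at most one instance of that family. Where you genuinely differ is in how that claim is proved and in the choice of hard distribution. The paper uses a finite grid of diagonal instances and, crucially, also puts small probability $\epsilon$ on the ``over-budget'' crossed instances: it needs the crossed profile to carry positive probability so that a finite expected ratio forces at least one winner there, after which monotonicity, individual rationality and the budget give the contradiction. Your argument is tighter: on any diagonal profile where both items win, budget feasibility pins both threshold payments exactly ($\theta_1(1-t)=t$, $\theta_2(t)=1-t$), and two such profiles splice into the crossed profile $(t_1,1-t_2)$ on which both items win outright (each bids strictly below its threshold) and are paid $t_2+(1-t_1)>B$ --- no finite-ratio assumption on the deterministic mechanisms in the support and no auxiliary instances are needed, which makes the lemma $|Z_M|\le 1$ unconditional and cleaner. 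The one caveat is the continuous averaging step: a deterministic truthful mechanism is an arbitrary monotone allocation rule, so for pathological thresholds the map $t\mapsto \mathrm{val}_M(t,1-t)$ need not be Lebesgue measurable, and neither the integral $\int_0^1 \mathrm{val}_M(t,1-t)\,dt$ nor the Tonelli exchange with the expectation over $M$ is justified in full generality. This is harmless: replace the uniform distribution on $[0,1]$ by the uniform distribution on a finite grid $t\in\{B/n,2B/n,\ldots,(n-1)B/n\}$ (exactly the paper's diagonal instances); your key lemma applies verbatim, the exchange becomes a finite sum, and the average value of any deterministic mechanism is at most $1+\frac{1}{n-1}$ against an optimum of $2$, giving the bound of $2$ as $n\to\infty$.
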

\begin{proof}
We use Yao's min-max principle, which is a typical tool used to prove lower bounds. By the principle, we need to design
a distribution of instances and argue that any deterministic budget feasible mechanism cannot get an expected approximation ratio which is better than
$2$.

All the instances contain two items both with value $1$. Their costs $(c_1,c_2)$ are drawn from the following distribution (see Fig.~\ref{figure:square} in Appendix):
\begin{enumerate}
  \item $(\frac{kB}{n}, \frac{(n-k)B}{n})$ with probability $\frac{1-\epsilon}{n-1}$, where $k=1,2,\ldots, n-1$,
  \item $(\frac{iB}{n}, \frac{jB}{n})$ with probability $\frac{2\epsilon}{(n-1)(n-2)}$, where $i,j\in\{1,\ldots,n-1\}$ and $i+j>n$,
\end{enumerate}
where $1>\epsilon >0$ and $n$ is a larger integer.

We first claim that for any deterministic truthful budget feasible mechanism with finite expected approximation ratio, there is at most one instance, for which both items win in the mechanism.
Assume for contradiction that there are at least two such instances.
Note that for the second distribution $(\frac{iB}{n}, \frac{jB}{n})$, where $i+j>n$, it cannot be the case that both items win
due to the budget constraint. Hence, the two instances must be of the first type; denote them as $(\frac{k_1B}{n}, \frac{(n-k_1)B}{n})$
and $(\frac{k_2 B}{n}, \frac{(n-k_2 )B}{n})$, where $k_1 > k_2$. Consider then the instance $(\frac{k_1B}{n}, \frac{(n-k_2 )B}{n})$ .
Since $k_1+ n-k_2 > n$, this is the instance of the second type in our  distribution. Therefore it has nonzero probability (see fig.~\ref{figure:square}). The mechanism has finite approximation ratio, thus it should have finite approximation ratio on the instance $(\frac{k_1B}{n}, \frac{(n-k_2 )B}{n})$ as well. As a result, it cannot be the case that
both items loss. We assume that item 1 wins (the proof for the other case is similar); the payment to him is at least $\frac{k_1B}{n}$ due to
individual rationality. Then consider the original instance $(\frac{k_2 B}{n}, \frac{(n-k_2 )B}{n})$; item 1 should also win and get a threshold payment, which is equal to or greater than $\frac{k_1B}{n}$. Therefore the payment to second item because of the budget constraint is at most $B-\frac{k_1B}{n}= \frac{(n-k_1 )B}{n}$.
Since $\frac{(n-k_1)B}{n}<\frac{(n-k_2)B}{n}$, we arrive at a  contradiction with either individual rationality or assumption that both items won in the instance $(\frac{k_2 B}{n}, \frac{(n-k_2)B}{n})$.

On the other hand, for all instances $(\frac{kB}{n}, \frac{(n-k)B}{n})$, both items win in the optimal solution with value 2.
Hence, the expected approximation ratio of any deterministic truthful budget feasible mechanism is at least
$\frac{1-\epsilon}{n-1}\cdot 1 +  (n-2) \cdot \frac{1-\epsilon}{n-1}\cdot 2+ \epsilon \cdot 1=2-\epsilon -\frac{1-\epsilon}{n-1}.$
The ratio approaches to $2$ when $\epsilon \rightarrow 0$ and $n\rightarrow \infty$. This completes the proof.
\end{proof}

\section{Beyond Submodularity}

A natural generalization of knapsack is to consider heterogeneous items.
That is, we are given $m$ different types of items and each item has a (private) cost $c_i$ and a (public) value $v_i$,
as well as an indicator $t_i\in [m]$ standing for the type of item $i$. The goal is to pick items of different types
(i.e., one cannot pick more than one item of the same type) to maximize total value given a budget constraint $B$.
The knapsack problem studied in the last section is therefore a special case of the heterogeneous problem when all items are of different types.
However, we cannot simply apply the mechanisms for knapsack here because of heterogeneous items. (Notice however that the lower bounds established in the last section still work.)

The main difference of this problem with knapsack or general submodular functions is that here not every subset is a feasible solution. A straightforward greedy could end up with a very poor solution: Consider a situation that every type contains one very small item (both $v_i$ and $c_i$ are very small) but with large value cost ratio $\frac{v_i}{c_i}$; greedy will take all these small items first and therefore not be able to take more since each type already has one item. The overall value of this greedy solution can be arbitrarily bad compared to the optimal solution.

To construct a truthful mechanism for heterogeneous knapsack, we employ a greedy strategy with {\em deletions}. The main idea is that at every time making a greedy move, we consider two possible changes: (i) add a new item whose type has not been considered before, and (ii) replace an existing item with the new one of same type. Among all the possible choices (of two types), we greedily select items with highest value cost ratio: In the case of adding a new item, its value cost ratio is defined as usually $\frac{v_i}{c_i}$. For the replacement case where we replace $i$ with $j$, its marginal value is $v_j-v_i$ and marginal cost is $c_j-c_i$, and hence its value cost ratio is defined to be $\frac{v_j-v_i}{c_j-c_i}$.

As before, now we assume that all the items are ordered according to their appearances in the greedy algorithm (note that some items never appear in the algorithm and we simply ignore them). The following greedy strategy is similar to what we did for the knapsack problem. In Appendix~\ref{appendix-heterogeneous}, we prove that it is monotone (therefore truthful) and budget feasible.
(Here for notation simplicity, we assume that we already take an item with $c=0$ and $v=0$ for each type, thus every greedy step can be viewed as a replacement.)

\begin{center}
\small{}\tt{} \fbox{
\parbox{4.5in}{
\hspace{0.05in} \\[-0.05in] $\greH$
\begin{enumerate}
\item Let $k=1$, $S=\emptyset$, and $last[j]=0$ for $j\in[m]$
\item While $k\le |A|$ and $c(k)-c(last[t_k])\le B\cdot \frac{v(k)-v(last[t_k])}{v(k)-v(last[t_k])+\sum_{i\in S}v(i)}$
			\begin{itemize}
			\item let $S \leftarrow (S\setminus\{last[t_k]\})\cup \{k\}$
			\item let $last[t_k] = k$
			\item let $k\leftarrow k+1$
			\end{itemize}
\item Return winning set $S$
\end{enumerate}
}}
\end{center}

By applying the above $\greH$, we have the following claim for heterogeneous knapsack. (Details can be found in Appendix~\ref{appendix-heterogeneous}.)

\begin{theorem}\label{theorem-star-knapsack-mechanism}
There are $2+\sqrt{2}$ approximation deterministic and 3 approximation randomized polynomial truthful budget feasible mechanisms for knapsack with heterogeneous items.
\end{theorem}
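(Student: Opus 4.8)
The plan is to mirror the three-part analysis of \RMmodular\ and \Mmodular\ from Section~3, replacing the general submodular greedy \Gmodular\ with \greH\ and carefully tracking the extra bookkeeping caused by the deletions. Concretely, I would first establish the two structural facts about \greH\ that are the analogues of what we used for \Gmodular: (a) \emph{monotonicity} of the allocation rule, so that the threshold-payment characterization of Myerson applies, and (b) a \emph{per-item payment bound} of the form ``the threshold payment to item $j$ in the winning set $S$ is at most $(v(j)-v(last[t_j]))\cdot \frac{B}{v(S)}$,'' from which budget feasibility $\sum_{i}p_i\le B$ follows by summing. For (a), the subtlety (flagged in the excerpt) is that when item $j$ lowers its bid it not only enters more easily but also changes the marginal value/cost ratios of later replacement moves of \emph{the same type}, potentially evicting an item that $j$ itself displaced; I would argue, as in Lemma~\ref{lemma-pay-upper}, that lowering a winner's bid only improves the ratio at which every prefix move was made, so the same sequence of greedy moves (or a pointwise-better one) still goes through and $j$ stays in $S$. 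For (b), I would adapt the proof of Lemma~\ref{lemma-pay-upper} almost verbatim, using Lemma~\ref{submodular average} (which holds for additive $v$ trivially, with the marginal $v(k)-v(last[t_k])$ playing the role of $m_k$) to bound the ``overflow'' $v(S_k\cup S)-v(S\cup\{j\})$ against the cost slack, and the stopping condition of \greH\ to get $c(S_k)\le B/2$.

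Second, for the approximation guarantee I would set up the fractional-greedy benchmark for heterogeneous knapsack: order all items by value-cost ratio (new-item ratio $v_i/c_i$ or replacement ratio $(v_j-v_i)/(c_j-c_i)$ as appropriate), let $\ell$ be the last index fitting in budget $B$, and define $fgre$ analogously with a fractional last piece. The analogue of Lemma~\ref{submodular reduction} — that $fgre(A)\ge(1-1/e)\cdot opt(A)$ for heterogeneous knapsack — should follow from the same charging argument, since the greedy-with-deletions process on the LP relaxation of the heterogeneous problem is exactly the standard greedy on a matroid-constrained (partition matroid: one item per type) additive objective, for which the $1-1/e$ bound is classical; at worst I would prove it directly by the same exchange argument used in Appendix~\ref{appendix-submodular-reduction}. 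Then, exactly as in Section~3.1, I would show $fgre(A) < 3\sum_{i\in T}(v(i)-v(last[t_i])) + 2v(i^*)$ where $T=\greH(A,B/2)$ and $i^*$ is the highest-value item, giving $opt(A,B)\le \frac{e}{e-1}\big(3\greH(A,B/2)+2v(i^*)\big)$. The randomized mechanism that returns $i^*$ with probability $2/5$ and \greH$(A,B/2)$ with probability $3/5$ then has ratio $\tfrac{5e}{e-1}$ — but the theorem claims $3$, which is better, so for knapsack one exploits \emph{additivity}: the fractional greedy on an additive objective is exact up to the last fractional item, i.e. $fgre(A)=opt_{\textrm{LP}}(A)\ge opt(A)$, so one actually gets $opt(A,B)\le 3\,\greH(A,B/2)+2v(i^*)$ without the $\frac{e}{e-1}$ loss, and re-optimizing the mixing probability ($1/3$ vs.\ $2/3$, say) yields ratio $3$; and solving the corresponding quadratic for the deterministic threshold (``output $i^*$ iff $x\cdot v(i^*)\ge opt(A\setminus\{i^*\},B)$'') with the coefficients $3$ and $2$ in place of $\frac{3e}{e-1}$, $\frac{2e}{e-1}$ gives the clean bound $2+\sqrt 2$ (one checks $x=1+\sqrt2$ makes $1+x$ equal to $3+\tfrac{6}{x-2}$).

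Third, I would verify truthfulness and budget feasibility of the two wrapper mechanisms exactly as for \Mmodular\ and \RMmodular: the $i^*$-branch pays $B\le B$ and is monotone; the $\greH$-branch is monotone and budget feasible by parts (a),(b); in the deterministic mechanism the key point (as in the proof of the $8.34$ theorem) is that $i^*$'s bid does not enter $opt(A\setminus\{i^*\},B)$, so the branching is independent of $i^*$'s report, and Step~2's test only imposes \emph{extra} upper bounds on the thresholds of \greH's winners, preserving budget feasibility; one also checks that the remaining agents cannot manipulate the comparison $x\cdot v(i^*)$ vs.\ $opt(A\setminus\{i^*\},B)$ to their advantage, since any winner in the $\greH$-branch still faces a threshold bounded by part~(b).

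I expect the main obstacle to be part (a), the monotonicity of \greH: with deletions, a decrease in one winner's bid can reshuffle which item of its type is ``current'' at later steps, and one must show no \emph{other} winner is evicted as a side effect — this is precisely the ``much more involved'' proof the excerpt promises in Appendix~\ref{appendix-heterogeneous}, and it is where the partition-matroid structure (each type contributes a nested chain $0=v(last) < v < \cdots$ of replacement moves) has to be used to show that the greedy trajectory is monotone under coordinatewise decrease of costs. The rest — Lemma~\ref{submodular average}-style averaging, the $fgre$ accounting, and the quadratic that produces $2+\sqrt2$ and $3$ — is routine adaptation of Section~3.
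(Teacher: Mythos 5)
Your high-level architecture (greedy with deletions plus the max-value item, monotonicity, per-winner payment bound, comparison against a fractional benchmark, knapsack-style threshold/randomization) is the paper's, but two of your concrete steps are wrong as stated. First, the quantitative bookkeeping: you carry over the $B/2$ budget and the resulting bound $opt\le 3\cdot\greH(A,B/2)+2v(i^*)$ from the submodular analysis, and then claim that mixing $1/3$ vs.\ $2/3$ gives ratio $3$ and that $x=1+\sqrt2$ solves $1+x=3+\tfrac{6}{x-2}$. Neither holds: with coefficients $3$ and $2$ the best mixed ratio is $5$, and $1+x=3+\tfrac{6}{x-2}$ gives $x=2+\sqrt6$, i.e.\ ratio $3+\sqrt6\approx 5.45$. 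The paper instead runs \greH\ with the \emph{full} budget $B$ (exactly as \gre\ for ordinary knapsack) and proves the payment bound directly rather than via the Lemma~\ref{lemma-pay-upper}/``$c(S_k)\le B/2$'' route; this yields $\fHK(A)<2\,v(\greH)+v_{i^*}$, from which $1/3$--$2/3$ mixing gives $3$ and the threshold $x=1+\sqrt2$ (solving $1+x=2+\tfrac{2}{x-1}$) gives $2+\sqrt2$, as in Theorem~\ref{th_M_knapsack}. Second, your per-item payment bound ``at most $(v(j)-v(last[t_j]))\cdot\frac{B}{v(S)}$'' is false: since the acceptance condition constrains only the \emph{marginal} cost $b_j-c(last[t_j])$, the true threshold carries the additive term $c(last[t_j])$ (Claim~\ref{bound_payment_hk}). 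E.g.\ with one type containing $a=(0.1B,1)$ and $j=(0.3B,1.5)$, \greH\ outputs $S=\{j\}$ and $j$ can still win with bid $0.4B>\tfrac{B}{3}=(v(j)-v(a))\tfrac{B}{v(S)}$. Budget feasibility then does not follow by direct summation; the paper recovers $p_j\le v(j)\tfrac{B}{v(S)}$ by telescoping over the whole chain of same-type items that ever entered the winning set (Claim~\ref{budget_feasible_hk}).

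Two further points. The monotonicity of \greH, which you correctly identify as the crux but defer, is not just ``prefix ratios only improve'': a bid change by $j$ changes the per-type convex hull, hence which items of type $t_j$ appear in the ordered set and with which tangents; the paper's proof needs exactly this convex-hull preprocessing (the ordering is inherited from \fHK, not the raw ratio ordering you describe) together with a case analysis on whether the item at which the perturbed run stops has type $t_j$ or not. Your benchmark also needs more care than ``matroid greedy gives $1-1/e$'': the paper characterizes the fractional optimum of heterogeneous knapsack via per-type convex hulls (at most two fractional items overall), computes it in polynomial time with \fHK, and reduces everything to ordinary knapsack with marginal costs/values $\tilde c_k=c_k-c(last[t_k])$, $\tilde v_k=v_k-v(last[t_k])$, so that the analysis of Theorem~\ref{th_M_knapsack} applies verbatim; relatedly, the deterministic test must compare $v_{i^*}$ against this fractional optimum $\fHK(A\setminus\{i^*\})$ rather than the integral $opt(A\setminus\{i^*\},B)$, or the mechanism is not polynomial time.
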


Finally, we comment that greedy is typically the first choice when we consider designing truthful mechanisms because it usually has a nice monotone property. However, when we allow cancelations in the greedy process, its monotonicity may fail. In the heterogeneous knapsack problem, fortunately $\greH$ is still monotone (although its proof is much more involved) and therefore we are able to apply it to design truthful mechanisms with good approximation ratios. Our idea sheds light on the possibility of exploring budget feasible mechanisms in larger domains beyond submodularity.

\newpage
\appendix

\begin{figure}
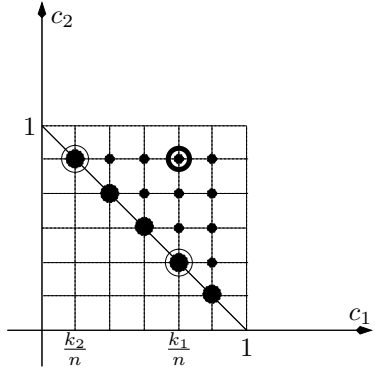

\begin{center}
\include{square}
\end{center}
\caption{distribution for $n=6$}
\label{figure:square}
\end{figure}

\section{Proof of Lemma \ref{submodular reduction}}\label{appendix-submodular-reduction}

\begin{proof}
Assume that all weights $w_i$ are integers. We reduce our weighted problem with monotone submodular utility function $u$ to the unweighted one as follows.

\begin{itemize}
\item For each item $i\in[n]$ we consider new $w_i$ items of unit weight.
      Denote them as $i_j$ for $j\in[w_i]$ and call $i$ to be the type of the
      unit $i_j$.
\item We make the new valuation function $\nu$ depends only on the amounts of
      unit items of each type.
\item Let a set $S$ contains $a_i$ units of each type $i$. Independently
      for each type pick at random in the set $\calR$ with probability
      $\frac{a_i}{w_i}$ weighted item $i$. Define $\nu(S)=E(u(\calR))$.
\end{itemize}

Therefore
$$\nu(S)=\frac{1}{w_1\cdot\ldots\cdot w_n}\sum_{\pi}u\left(S\cdot \pi\right)$$
where $\pi$ is a sampling of units one for each type (there are $w_1\cdot\ldots\cdot w_n$ variants for $\pi$); $S\cdot \pi$ is a vector of types at which $\pi$ hits $S$.

Using this formula it is not hard to verify monotonicity and submodularity of $\nu$. Indeed, e.g. to verify submodularity one only need to check that the marginal contribution of any unit is smaller for a large set, i.e. for $S\subset T$ and $i_j\notin T$ verify inequality $\nu(S\cap\{i_j\})-\nu(S)\ge\nu(T\cap\{i_j\})-\nu(T)$, which is pretty straightforward.

For any $T\subseteq [n]$ if we consider a set of units $S=\{i_k|i\in T, 1\le k\le w_i\}$, then according to the definition $\nu(S)=u(T)$. Hence, optimal solution to the unit weights problem is large or equal than the optimal solution to the original problem.

To conclude the proof it is only left to show that the greedy scheme for the unit weights gives us the same result as our fractional greedy scheme for an integer weights. Note that once we have taken a unit of type $i$ we will proceed to take units of type $i$ until exhaust it completely (we brake ties in favor to the last type we have picked). Indeed, let $i_k, i_{k+1}\notin S$ then
\begin{eqnarray*}
\nu(S\cup\{i_k\})-\nu(S) &=& \nu(S\cup\{i_{k+1}\})-\nu(S) \\
&=& \frac{1}{w_1\cdot\ldots\cdot w_n}\sum_{\{\pi|i_{k+1}\in\pi\}} u\left(S\cup\{i_{k+1}\}\cdot\pi\right)-u\left(S\cdot\pi\right) \\
&=& \frac{1}{w_1\cdot\ldots\cdot w_n}\sum_{\{\pi|i_{k+1}\in\pi\}} u\left(S\cup\{i_k,i_{k+1}\}\cdot\pi\right)-u\left(S\cup \{i_k\}\cdot\pi\right) \\
&=& \nu(S\cup\{i_k,i_{k+1}\})-\nu(S\cup\{i_k\})
\end{eqnarray*}

Therefore, marginal contribution of the type $i$ does not decrease if we include in the solution units of type $i$. On the other hand, because $\nu$ is submodular, marginal contribution of any other type can not increase. So we will take unit $i_{k+1}$ right after $i_k$.

Assume we already have picked set $S$ and now are picking the first unit of a type $i$. Hence, $S$ comprises all units of a type set $T$. Then we have
\[\nu\left(S\cup\{i_1\}\right)-\nu\left(S\right)=
\frac{1}{\prod_{k=1}^{n}w_k}\sum_{\{\pi|i_{1}\in\pi\}} u\left(S\cup\{i_1\}\cdot\pi\right)-u\left(S\cdot\pi\right)
=\frac{\prod_{k\neq i}w_k}{\prod_{k=1}^{n}w_k}m_{T}(i)=\frac{m_T(i)}{w_i}\]
Thus $i=argmax_{i\notin T}\frac{m_T(i)}{w_i}$ which coincides with the rule of our fractional greedy scheme.

In case of real weights the same approach can be applied but in more tedious way.
\end{proof}

\section{Mechanisms for Knapsack}\label{appendix-knapsack}

In this section, we describe our deterministic and randomized mechanisms for knapsack, yielding a proof for Theorem~\ref{theorem-knapsack-mechanism}.

\subsection{Deterministic Mechanism}

We consider the following greedy strategy studied by Singer~\cite{PS10}.

\begin{center}
\small{}\tt{} \fbox{
\parbox{4.0in}{
\hspace{0.05in} \\[-0.05in] $\gre(A)$
\begin{enumerate}
\item Order all items in $A$ s.t. $\frac{v_1}{c_1}\ge \frac{v_2}{c_2}\ge \cdots \ge \frac{v_{|A|}}{c_{|A|}}$
\item Let $k=1$ and $S=\emptyset$
\item While $k\le |A|$ and $c_k\le B\cdot \frac{v_k}{\sum_{i\in S\cup\{k\}}v_i}$
\begin{itemize}
\item $S \leftarrow S\cup \{k\}$
\item $k\leftarrow k+1$
\end{itemize}
\item Return winning set $S$
\end{enumerate}
}}
\end{center}

It is shown that the above greedy strategy is monotone (and therefore truthful).
Actually, it has the following remarkable property: any $i\in S$ cannot control the output set given that $i$
is guaranteed to be a winner. That is, if the winning sets are $S$ and $S'$ when $i$ bids $c_i$ and $c'_i$,
respectively, where $i\in S\cap S'$, then $S=S'$. Otherwise, consider the item $i_0\notin S\cap S'$ with the smallest index;
assume without loss of generality that $i_0\in S\setminus S'$. Let $T = \{j\in S\cap S'~|~j<i_0, j\neq i\}$ be the winning
items in $S\cap S'\setminus \{i\}$ before $i_0$. Then
$c_{i_0}\le B\cdot \frac{v_{i_0}}{\sum_{j\in S}v_j} \le B\cdot \frac{v_{i_0}}{\sum_{j\in T}v_j + v_i+v_{i_0}}$, which implies that
$i_0$ should be a winner in $S'$ as well, a contradiction.

Given the greedy strategy described above, our mechanism for
knapsack is as follows (where $fopt(A)$ denotes the value of the 
optimal fractional solution; for knapsack it can be computed in polynomial time).

\begin{center}
\small{}\tt{} \fbox{
\parbox{4.0in}{
\hspace{0.05in} \\[-0.05in] \Mknapsack
\begin{enumerate}
\item Let $A=\{i~|~c_i\le B\}$ and $i^*\in \arg\max_{i\in A} v_i$
\item If $(1+\sqrt{2})\cdot v_{i^*} \ge fopt(A\setminus \{i^*\})$, return $i^*$
\item Otherwise, return $S= \gre(A)$
\end{enumerate}
}}
\end{center}

\begin{theorem}\label{th_M_knapsack}
\Mknapsack\ is a $2+\sqrt{2}$ approximation budget feasible truthful
mechanism for knapsack.
\end{theorem}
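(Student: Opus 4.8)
The plan is to verify in turn truthfulness, budget feasibility, and the $2+\sqrt2$ approximation bound, specializing the analysis of \RMmodular\ and \Mmodular\ to additive valuations: here Lemma~\ref{submodular reduction} sharpens to the identity $fgre(A)=fopt(A)$ (greedy is exact for fractional knapsack), and one can afford to run the greedy on the full budget $B$. For truthfulness I would argue monotonicity of the allocation directly, and this is the place to spell out the detailed argument referred to by the submodular mechanisms. The key point is that the bid $c_{i^*}$ is irrelevant to Step~2: $v_{i^*}$ is public and $fopt(A\setminus\{i^*\})$ depends only on the other items, so the branch taken is fixed independently of $c_{i^*}$; hence if $i^*$ wins it either always wins (Step~2, threshold $B$) or wins exactly when the monotone greedy $\gre(A)$ selects it, and lowering $c_{i^*}$ preserves both. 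For any other agent $j$, winning can happen only through $\gre(A)$ in Step~3, and lowering $b_j$ weakly increases $fopt(A\setminus\{i^*\})$, which keeps the Step~2 test violated and the instance in Step~3, where monotonicity of $\gre$ finishes the job.

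For budget feasibility: if the mechanism returns $i^*$ the lone payment $B$ is within budget; otherwise it returns $S=\gre(A)$, and I would use that each winner's threshold in $\gre$ is at most $\frac{v_i}{v(S)}B$ (the additive analogue of Lemma~\ref{lemma-pay-upper}, equivalently the bound of~\cite{PS10}; note $c(S)\le B$ since $c_i/v_i\le c_k/v_k\le B/v(S)$ all along the greedy order) together with the observation that Step~2 can only lower these thresholds --- bidding high enough to reverse the Step~2 test turns any $j\ne i^*$ into a loser. Summing over $i\in S$ gives total payment at most $B$.

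For the approximation ratio, set $x=1+\sqrt2$. I would first record three facts: $opt(A)\le fopt(A)=fgre(A)$; $fopt(A)\le fopt(A\setminus\{i^*\})+v_{i^*}$ (zeroing $i^*$ in a fractional optimum of $A$ costs at most $v_{i^*}$ and leaves a feasible solution on $A\setminus\{i^*\}$); and, by the stopping-point computation already used for \RMmodular\ but now with budget $B$ and additive $v$, $fgre(A)\le 2\,\gre(A)+v_{i^*}$ (if $\gre(A)=\{1,\dots,k\}$ stops at $k+1$ then every later fractionally-taken piece $j$ has $c_j>B\,v_j/(v(S)+v_{k+1})$, and since these costs total at most $B$ we get $\sum_{j>k}v_j\le v(S)+v_{k+1}\le\gre(A)+v_{i^*}$, using $v_{k+1}\le v_{i^*}$; the case where $\gre$ exhausts $A$ is trivial since then $\gre(A)=v(A)\ge opt(A)$). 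Then I split into two cases. If the Step~2 test holds, $opt\le fopt(A\setminus\{i^*\})+v_{i^*}\le (x+1)\,v_{i^*}=(2+\sqrt2)\,v_{i^*}$, so the output $v_{i^*}$ is within $2+\sqrt2$ of $opt$. If it fails, $x\,v_{i^*}<fopt(A\setminus\{i^*\})\le fopt(A)\le 2\,\gre(A)+v_{i^*}$ forces $v_{i^*}<\frac{2}{x-1}\gre(A)=\sqrt2\,\gre(A)$, hence $opt\le 2\,\gre(A)+v_{i^*}<(2+\sqrt2)\,\gre(A)$, and the output $\gre(A)$ is within $2+\sqrt2$ of $opt$. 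The constant $1+\sqrt2$ is precisely the one balancing the two cases, since $x+1=\frac{2x}{x-1}$ is equivalent to $x^2-2x-1=0$.

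The step I expect to be the main obstacle is not any single inequality but getting the interaction with Step~2 exactly right: one must simultaneously ensure that the auxiliary test destroys neither monotonicity (which needs both that $c_{i^*}$ does not enter Step~2 and that $fopt(A\setminus\{i^*\})$ is monotone in each $c_j$) nor the budget bound (which needs that Step~2 never raises a threshold above a $\gre$ value whose thresholds already sum to $B$). Everything else is a faithful transcription of the submodular analysis with the factor $e/(e-1)$ replaced by $1$ and the budget $B/2$ replaced by $B$.
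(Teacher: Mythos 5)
Your proposal is correct and follows essentially the same route as the paper's proof in Appendix~\ref{appendix-knapsack}: monotonicity via the independence of Step~2 from $c_{i^*}$ and the monotone effect of lowering other bids, budget feasibility by capping each winner's threshold at $B\,v_i/v(S)$ (with Step~2 only adding extra caps), and the approximation via $fopt(A)<2\,\gre(A)+v_{i^*}$ combined with $fopt(A)\le fopt(A\setminus\{i^*\})+v_{i^*}$ and the two-case balance at $x=1+\sqrt{2}$. The only differences are presentational (you bound thresholds directly rather than writing the explicit payment formula, and you note the trivial case where greedy exhausts $A$).
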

\begin{proof}
The proof consists of each property stated in the claim.
\begin{itemize}
\item {\em Truthfulness.} We analyze monotonicity of the mechanism according to the condition of Step~2 and 3, respectively. If $i^*$ wins in Step~2 (note that the fractional optimal value computed in Step~2 is independent of the bid of $i^*$), then $i^*$ still wins if he decreases his bid.

    If the condition in Step~2 fails and the mechanism runs Step~3, for any $i\in S$, the subset $S$ remains the same if $i$ decreases his bid. Note that if $i\neq i^*$, when $i$ decreases his bid, the value of the fractional optimal solution computed in Step~2 will not decrease. Hence $i$ is still a winner, which implies that the mechanism is monotone.

\item {\em Individual rationality and budget feasibility.} If $i^*$ wins in Step~2, his payment is the threshold bid $B$. Otherwise, assume that all buyers in $A$ are ordered by $1,2,\ldots,n$; let $S=\{1,\ldots,k\}$. Note that it is possible that $i^*\in S$. For any $i\in S$, let $q_i$ be the maximum value that $i$ can bid such that the fractional optimal value on instance $A\setminus \{i^*\}$ is still larger than $v_{i^*}$. Note that $c_i\le q_i$.

    The payment to any winner $i\in S\setminus \{i^*\}$ is
    $p_i=\min\left\{v_i\cdot \frac{c_{k+1}}{v_{k+1}},B\cdot \frac{v_i}{\sum_{j\in S}v_j}, q_i\right\}$,
    and $p_{i^*}=\min\left\{v_{i^*}\cdot \frac{c_{k+1}}{v_{k+1}},B\cdot \frac{v_{i^*}}{\sum_{j\in S}v_j}\right\}$ if $i^*\in S$.
    It can be seen that the mechanism is individually rational. Further,
    $\sum_{i\in S}p_i \le \sum_{i\in S} B\cdot \frac{v_i}{\sum_{j\in S}v_j} = B$, which implies that the mechanism is budget feasible.

\item {\em Approximation.} Assume that all buyers in $A$ are ordered by $1,2,\ldots,n$, and $T=\{1,\ldots,k\}$ is the subset returned by $\gre(A)$. Let $\ell$ be the maximal item for which $\sum_{i=1,\ldots,\ell}c_i\le B$. Let $c'_{\ell+1}=B-\sum_{i=1,\ldots,\ell}c_i$ and $v'_{\ell+1}=v_{\ell+1}\cdot \frac{c'_{\ell+1}}{c_{\ell+1}}$. Hence, the optimal fractional solution is \[fopt(A)=\sum_{i=1}^{\ell}v_i + v'_{\ell+1}\]

    For any $j=k+1,\ldots,\ell$, we have $\frac{c_j}{v_j}\ge \frac{c_{k+1}}{v_{k+1}} > \frac{1}{v_{k+1}}\cdot B\cdot \frac{v_{k+1}}{\sum_{i=1}^{k+1}v_i}$, where the last inequality follows from the fact that the greedy strategy stops at item $k+1$. Hence, $c_j>B\cdot \frac{v_j}{\sum_{i=1}^{k+1}v_i}$. Same analysis shows $c'_{\ell+1}>B\cdot \frac{v'_{\ell+1}}{\sum_{i=1}^{k+1}v_i}$. Therefore,
    $B\cdot \frac{\sum_{j=k+1}^{\ell}v_j+v'_{\ell+1}}{\sum_{i=1}^{k+1}v_i}<\sum_{j=k+1}^{\ell}c_j+c'_{\ell+1}<B$,
    which implies that $\sum_{i=1}^{k}v_i > \sum_{j=k+2}^{\ell}v_j+v'_{\ell+1}$. Hence,
    \[fopt(A)=\sum_{i=1}^{\ell}v_i + v'_{\ell+1} < 2\sum_{i\in S}v_i + v_{i^*}\]

    A basic observation of the mechanism is that
    \[fopt(A)-v_{i^*} \le fopt(A\setminus\{i^*\})\le fopt(A)\]
    Hence, if the condition in Step~2 holds and the mechanism outputs $i^*$, we have
    \[ fopt(A) \le fopt(A\setminus \{i^*\}) + v_{i^*}\le (2+\sqrt{2})\cdot v_{i^*}\]
    If the condition in Step~3 fails and the mechanism outputs $S$ in Step~4, we have
    \[(1+\sqrt{2})\cdot v_{i^*} < fopt(A\setminus\{i^*\})\le fopt(A)  < 2\sum_{i\in S}v_i + v_{i^*}\]
    which implies that $v_{i^*} < \sqrt{2}\cdot \sum_{i\in S}v_i$. Hence,
    \[opt\le fopt(A)=\sum_{i=1,\ldots,\ell}v_i + v'_{\ell+1} < 2\sum_{i\in S}v_i + v_{i^*}\le (2+\sqrt{2})\cdot \sum_{i\in S}v_i.\]
    Therefore, the mechanism is $(2+\sqrt{2})$ approximation.
\end{itemize}

\end{proof}

\subsection{Randomized Mechanism}

Our randomized mechanism for knapsack is as follows.

\begin{center}
\small{}\tt{} \fbox{
\parbox{3.5in}{
\hspace{0.05in} \\[-0.05in] \RMknapsack
\begin{enumerate}
\item Let $A=\{i~|~c_i\le B\}$ and $i^*\in \arg\max_{i\in A} v_i$
\item With probability $\frac{1}{3}$, return $i^*$
\item With probability $\frac{2}{3}$, return $\gre(A)$
\end{enumerate}
}}
\end{center}

\begin{theorem}\label{theorem-random-knapsack}
\label{th_RM_knapsack}
\RMknapsack\ is a $3$ approximation universal truthful budget feasible
mechanism for knapsack.
\end{theorem}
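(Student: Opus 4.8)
The plan is to verify the three required properties of \RMknapsack\ separately---universal truthfulness, budget feasibility, and the approximation guarantee---just as for \Mknapsack; the advantage of randomization is that it removes the need for the Step~2 comparison used in the deterministic mechanism. Since a universally truthful randomized mechanism is, by definition, a distribution over deterministic truthful mechanisms, truthfulness reduces to checking that both branches have monotone allocation rules. The branch that always returns $i^*$ is monotone because $i^*$ is determined by the public values $v_i$ and is independent of all bids. The branch $\gre(A)$ is monotone by the standard greedy argument recalled before Theorem~\ref{th_M_knapsack} (an item outside the current winning set cannot raise its bid and enter it, and pre-filtering the agents with $c_i>B$ does not disturb monotonicity). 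Hence the $\tfrac13$--$\tfrac23$ mixture is universally truthful.

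For budget feasibility, if the mechanism returns $i^*$ its unique payment is the threshold bid $B$, which exactly meets the budget. If it returns $S=\gre(A)$, write $S=\{1,\dots,k\}$ in greedy order; reasoning as in the proof of Theorem~\ref{th_M_knapsack} but discarding the cap $q_i$ that there came from the Step~2 test, the threshold payment of each winner $i\in S$ is $p_i=\min\bigl\{v_i\cdot \tfrac{c_{k+1}}{v_{k+1}},\, B\cdot\tfrac{v_i}{\sum_{j\in S}v_j}\bigr\}\le B\cdot\tfrac{v_i}{\sum_{j\in S}v_j}$, where the ``$i\in S$ cannot control the output'' property of $\gre$ is exactly what makes this the true threshold. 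Summing, $\sum_{i\in S}p_i\le B$, so each branch---and therefore \RMknapsack---is budget feasible and individually rational.

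For the approximation ratio I would reuse the inequality established inside the proof of Theorem~\ref{th_M_knapsack}: with $S=\gre(A)$ one has $fopt(A)<2\sum_{i\in S}v_i+v_{i^*}$, and $opt\le fopt(A)$. Writing $\gre(A)$ for the value of its returned set, this yields $opt<2\,\gre(A)+v_{i^*}$. The expected value of \RMknapsack\ is then
\[
\tfrac13\,v_{i^*}+\tfrac23\,\gre(A)=\tfrac13\bigl(v_{i^*}+2\,\gre(A)\bigr)>\tfrac13\,opt,
\]
so the approximation ratio is at most $3$; the probabilities $\tfrac13,\tfrac23$ are chosen precisely so that $v_{i^*}$ and $\gre(A)$ enter the expected value with a common $\tfrac13$ fraction of their coefficients in $opt<v_{i^*}+2\,\gre(A)$. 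The only place demanding any care is re-deriving the threshold-payment formula for the plain greedy $\gre(A)$ without the extra upper bound that Step~2 of \Mknapsack\ imposed, and checking the thresholds still sum to at most $B$; everything else is a direct recombination of earlier lemmas with the mixing weights.
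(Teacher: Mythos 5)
Your proposal is correct and follows essentially the same route as the paper: both rely on the bound $fopt(A) < v_{i^*} + 2\sum_{i\in S}v_i$ from the analysis of \Mknapsack\ and the observation that the $\tfrac13$--$\tfrac23$ mixture makes the expected value exactly $\tfrac13\bigl(v_{i^*}+2\sum_{i\in S}v_i\bigr) > \tfrac13\,opt$. The only difference is that you spell out the truthfulness and budget-feasibility of each branch (including the threshold payments for $\gre(A)$ without the $q_i$ cap), which the paper simply asserts by reference to the deterministic case; this is a harmless elaboration, not a different argument.
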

\begin{proof}
Since both mechanisms in Step~2 and 3 are budget feasible and truthful, it is left only to prove approximation ratio.

Using the same notation and argument in the proof of Theorem~\ref{th_M_knapsack}, assume that all buyers in $A$ are ordered by $1,2,\ldots,n$, and $T=\{1,\ldots,k\}$ is the subset returned by $\gre(A)$. Let $\ell$ be the maximal item for which $\sum_{i=1,\ldots,\ell}c_i\le B$. Let $c'_{\ell+1}=B-\sum_{i=1,\ldots,\ell}c_i$ and $v'_{\ell+1}=c'_{\ell+1}\cdot \frac{v_{\ell+1}}{c_{\ell+1}}$. Hence, the optimal fractional solution is \[fopt(A)=\sum_{i=1}^{\ell}v_i + v'_{\ell+1}\]
and
\[fopt(A)=\sum_{i=1}^{\ell}v_i + v'_{\ell+1} < v_{i^*} + 2\sum_{i\in S}v_i.\]
The excepted value of \RMknapsack\ is therefore
\[\frac{1}{3}v_{i^*} + \frac{2}{3}\sum_{i\in S}v_i = \frac{1}{3}\Big(v_{i^*} + 2\sum_{i\in S}v_i\Big) > \frac{1}{3} opt \]
which completes the proof.
\end{proof}

\section{Knapsack with Heterogeneous Items}\label{appendix-heterogeneous}

In this section we analyze heterogeneous knapsack problem and $\greH$, which leads to a proof of Theorem~\ref{theorem-star-knapsack-mechanism}.

\subsection{Optimal Fractional Solution}

We start our study again on fractional solutions to the optimization problem. First we have to define what is a fractional relaxation for heterogeneous knapsack or more precisely what is a feasible fractional solution.

\begin{defi}
A feasible solution for heterogeneous knapsack is an $n$-tuple of real numbers $(\alpha_1,\ldots,\alpha_n)\in [0,1]^n$ satisfying  $\sum_{i=1}^{n}\alpha_i c_i \le B$ and $\sum_{i\in t^{-}_j}\alpha_i \le 1$ for any $j\in [m]$. An optimal fractional solution is a feasible solution that maximizes $\sum_{i=1}^{n}\alpha_i v_i$.
\end{defi}

We have the following observation on optimal solution.

\begin{lemma}
\label{lemma_HK}
For a given budget $B$ we can pick an optimal fractional solution $f_{OPT}$ such that
\begin{itemize}
\item there are at most two nonzero amounts of items of any type in $f_{OPT}$.  \item there is exactly one item of any
       type in $f_{OPT}$ except maybe only for one type.
\end{itemize}
\end{lemma}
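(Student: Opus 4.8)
The plan is to prove Lemma~\ref{lemma_HK} by an exchange argument on any optimal fractional solution. Start with an arbitrary optimal fractional solution $(\alpha_1,\dots,\alpha_n)$ (which exists by compactness of the feasible polytope and linearity of the objective), and show that if it violates either bullet we can transform it into another optimal solution that is ``closer'' to the desired structure, without decreasing the objective and without violating feasibility.

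First I would handle the first bullet. Fix a type $j$ and suppose three or more items of type $j$ have nonzero amount, say items $i_1,i_2,i_3\in t^{-}_j$ with $\alpha_{i_1},\alpha_{i_2},\alpha_{i_3}>0$. Order them by value-cost ratio $\frac{v_i}{c_i}$ (handling zero-cost items separately — a zero-cost item of positive value should be set to its maximal feasible amount, a zero-cost zero-value item can be dropped). Shift mass from the item with the worst ratio to the item with the best ratio among those used, adjusting the ``filler'' to keep $\sum_i \alpha_i c_i = B$ (or, if budget is slack, just increase the good item). Each such move is weakly improving, and either drives some $\alpha_{i_\cdot}$ to $0$ or drives the per-type sum constraint $\sum_{i\in t^{-}_j}\alpha_i$ to its bound $1$ with only two items active; a potential-function / lexicographic argument (e.g.\ minimize the number of strictly-positive $\alpha_i$ within the set of optimal solutions) shows we can reach a solution with at most two nonzero amounts per type.

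Next, for the second bullet: among optimal solutions satisfying the first bullet, pick one that maximizes $\sum_i \alpha_i c_i$ (equivalently, uses as much budget as possible), and among those, one minimizing the number of ``fractional'' types (types $j$ with $0<\sum_{i\in t^{-}_j}\alpha_i<1$). Suppose two distinct types $j\neq j'$ are both fractional. Then there is room to increase mass on the better-ratio active item of one type and decrease it on the worse-ratio active item of the other, keeping the budget fixed; pushing this to the boundary makes one of the two types integral (sum $=1$) or empty, contradicting minimality — unless the two types have exactly equal marginal ratios, in which case the objective is unchanged and we still reduce the count of fractional types, again a contradiction. Hence at most one type is fractional; combined with the first bullet, every other type has its single chosen item at amount exactly $1$ (amount $0$ for a whole type just means no item of that type is used, which is fine — or one can absorb it into the single allowed exceptional type).

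The main obstacle I expect is bookkeeping around degenerate cases: items with $c_i=0$, items with $v_i=0$, ties in value-cost ratios, and the interaction between the global budget constraint $\sum_i\alpha_i c_i\le B$ being tight versus slack. The clean way to organize this is to phrase everything as a lexicographic optimization over the feasible polytope — first maximize value, then maximize used budget, then minimize support size / number of fractional types — and argue each ``bad'' configuration admits a strictly lex-improving local move; since the feasible region is a polytope, the lex-optimum is attained at a point, and that point must have the claimed structure. I would also remark that this is essentially the standard structure of basic feasible solutions of an LP with $m+1$ nontrivial constraints, so at a vertex at most $m+1$ of the $\alpha_i$ are nonzero, which already gives the qualitative shape; the lemma's statement is just the sharpened, type-by-type version of that fact.
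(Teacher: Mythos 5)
Your organizing idea --- view the problem as a linear program over the polytope $\{\alpha\ge 0,\ \sum_{i\in t^{-}_j}\alpha_i\le 1\ \forall j,\ \sum_i\alpha_i c_i\le B\}$ and take a (lexicographically chosen) optimal vertex --- is a valid route and genuinely different from the paper's. The paper argues per type and geometrically: for a fixed expenditure $p_j$ on type $j$, the best attainable value is a point on the upper convex hull of $\{(c_i,v_i):i\in t^{-}_j\}\cup\{(0,0)\}$, hence a combination of at most two items; then two types that both sit strictly inside hull edges can have their expenditures shifted with $p_{j_1}+p_{j_2}$ fixed until one reaches a hull vertex, without decreasing value. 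Your vertex/basic-feasible-solution argument buys uniform handling of the degeneracies you worry about (zero costs, ties, slack budget), but note that the ``sharpened, type-by-type version'' you assert does need to be carried out: at a vertex with support $S$ and $k_j$ positive variables of type $j$, the tight constraints restricted to $S$ consist of type rows with pairwise disjoint supports plus possibly the budget row, so their rank is at most $\#\{j:k_j\ge 1\}+1$; since this rank must equal $|S|=\sum_j k_j$, one gets $\sum_j(k_j-1)\le 1$, i.e.\ no type uses three items and at most one type uses two --- exactly the lemma. Without this two-line rank count, ``at most $m+1$ nonzeros'' alone does not give the per-type statement.

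The step in your primary argument that does not survive scrutiny is the exchange criterion. Shifting mass within a type from the item of worst ratio $v_i/c_i$ to the item of best ratio is \emph{not} weakly improving in general, because two resources are being traded at different rates: the type constraint is measured in units of $\alpha$, the budget in units of cost. Concretely, with items $(c,v)=(1,10)$ and $(10,50)$ of the same type and the type constraint tight, moving $\delta$ of $\alpha$ from the ratio-$5$ item to the ratio-$10$ item changes the value by $-40\delta$ while merely freeing budget, which may or may not be recoverable elsewhere; so starting from an optimum this move can be strictly worsening, and the ``filler'' adjustment is doing all the work you have not specified. The correct local move is the one your fallback already implies: within the face of currently tight constraints there is a line through the current point (with three positive variables in a type and both the type and budget constraints tight there is exactly one degree of freedom), value is linear along it, so one of the two directions is value-non-decreasing; follow it until a coordinate hits $0$ or a constraint becomes tight. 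Equivalently, the right per-type trade-off is governed by the slopes of the upper convex hull (the quantities $tg$ used by \fHK), not by the raw ratios $v_i/c_i$. With the exchange phrased this way (or with the vertex rank count above), your proof goes through, including the second bullet by the same one-degree-of-freedom move across two fractional types with the budget held fixed.
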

\begin{proof}

Consider any optimal solution $f^{'}_{OPT}$. Fix the price $p_j$ spent on the particular type $j$ in it. We can use only two items of type $j$ in order to provide the maximum value for the price $p_j$. Indeed, if one draws all items of type $j$ in the plain with $x$-coordinate corresponding to the cost and $y$-coordinate corresponding to the value of an item together with the point $(0,0)$, then the condition $\sum_{i\in t^{-}_j}\alpha_i \le 1$ will describe a point in the convex hull of the drawn set.

Thus we can take $f_{OPT}$ with at most two items of a type and derive the first part of the lemma.

One can derive the second part of the lemma by changing $p_{j_1}$ and $p_{j_2}$ in $f_{OPT}$ such that $p_{j_1}+p_{j_2}$ remains constant. Indeed,
appealing to the picture again, we consider two convex polygons $P_1$ and $P_2$ for the types $j_1$ and $j_2$. If both prices $p_{j_1}$ and $p_{j_2}$ get strictly inside the corresponding sides of those polygons, then by stirring $p_{j_1}$ and $p_{j_2}$ in $f_{OPT}$ with keeping $p_{j_1}+p_{j_2}$ constant we can get to a vertex of $P_1$ or $P_2$ and do not decrease the total value.
\end{proof}

The following algorithm computes an optimal fractional solution for heterogeneous knapsack. (For convenience we add an item numbered by $0$ of a new type with cost $0$ and value $0$; this does not affect any optimal solution.)

\begin{center}
\small{}\tt{} \fbox{
\parbox{6.0in}{
\hspace{0.05in} \\[-0.05in] \fHK
\begin{enumerate}
\item For each type $j\in[m]$, (partially) order items of type $j$ as follows:
    \begin{itemize}
	\item let $last = 0$, $tg = 0$ and $A_j=\emptyset$
	\item while $v(last)< \max\limits_{i\in t^{-}_j}v(i)$
		\begin{itemize}
        \item let $k = \arg\max_{i\in t^{-}_j}	\frac{v(i)-v(last)}{|c(i)-c(last)|}$ and add $k$ to $A_j$
		\item define $tg_k = \frac{v(k)-v(last)}{|c(k)-c(last)|}$
		\item let $last = k$
		\end{itemize}
	\end{itemize}
\item Comprise all $A_j$ into one big set $A$ and order all items s.t. $tg_1\ge \cdots\ge tg_{|A|}$
\item Let $last[j]=0$ for each $j\in[m]$, $\alpha_i=0$ for each $i\in[n]$ and $k=1$
\item While $k\le|A|$ and $c_k + \sum_{i=1}^{k-1}\alpha_{i}\cdot c_i\le B$
	\begin{itemize}
	\item let $\alpha_{last[t_k]}\leftarrow 0$
	\item let $last[t_k]\leftarrow k$, $\alpha_k\leftarrow 1$
	\item let $k\leftarrow k+1$
	\end{itemize}
\item If $k\le|A|$, then let $\alpha_k = \frac{B-\sum_{i=1}^{k-1}\alpha_{i}c_i}{c_k}$ and $\alpha_{last[t_k]} = 1-\alpha_k$
\item Return vector $(\alpha_i)_{i\in [n]}$					
\end{enumerate}
}}
\end{center}

\begin{figure}
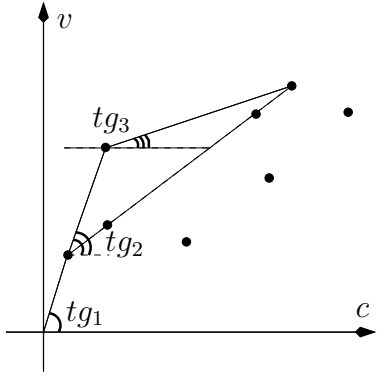

\begin{center}
\include{fig2}
\end{center}
\caption{Convex hull}
\label{figure:star}
\end{figure}

\begin{theorem}
\fHK\ computes an optimal fractional solution for heterogeneous knapsack.
\end{theorem}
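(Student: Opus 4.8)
The plan is to reduce heterogeneous knapsack to an ordinary fractional knapsack over a derived family of divisible ``segments'' --- one per edge of a per-type concave value-for-price curve --- and then to apply the classical greedy exchange argument. The heart of the matter is a normal form for optimal fractional solutions. For a type $j$, let $H_j$ be the upper boundary of the convex hull of $\{(0,0)\}\cup\{(c_i,v_i):i\in t^-_j\}$. Fixing the amount $p_j$ of money devoted to type $j$, the largest value one can extract from type $j$ is exactly the height of $H_j$ over the abscissa $p_j$, attained at a hull vertex or along one hull edge; any other assignment to type $j$ can be improved without changing $p_j$. Since spending on type $j$ more than the cost of its maximum-value item is wasteful, only the part of $H_j$ between $(0,0)$ and that item is relevant. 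Together with the two bullets of Lemma~\ref{lemma_HK}, this gives an optimal fractional solution that, for each type, sits at a vertex of $H_j$, with the exception of at most one ``split'' type sitting strictly inside one hull edge; equivalently, it selects for each type a prefix of the ordered list of that type's hull edges, taking all but the last fully and the last one with some fraction $x\in(0,1]$.

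Next I would verify that Step~1 of \fHK\ computes precisely these relevant hull edges. By induction on the greedy choice, the items appended to $A_j$ are the successive vertices of $H_j$ read from $(0,0)$ up to the maximum-value item (the absolute value in $\arg\max_i\frac{v(i)-v(last)}{|c(i)-c(last)|}$ is only needed to formally discard dominated items, which the greedy never picks), the stopping test $v(last)<\max_i v(i)$ halts exactly at the peak, and $tg_k$ is the slope of the $k$th edge; concavity of $H_j$ then says the $tg$-values are non-increasing along each $A_j$. This lets me recast the global problem: treat the $k$th item of $A_j$ as a divisible item of marginal cost $\Delta c_k=c(k)-c(last_k)$, marginal value $\Delta v_k=v(k)-v(last_k)$ and density $tg_k$, where $last_k$ is its predecessor in $A_j$. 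By the normal form, an optimal fractional solution corresponds to a ``prefix-per-type'' selection of these segments of total cost at most $B$ maximizing total value, and, conversely, any prefix-per-type selection yields a feasible original solution of the same value by putting $\alpha$-mass $1-x$ and $x$ on the two hull vertices that bound the last, fractional edge.

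Now the classical argument applies. Run the fractional-knapsack greedy on the whole family of segments: take them in non-increasing order of density, each fully while the budget allows, then an appropriate fraction of the first one that would overflow. Since densities decrease along every type, this greedy uses each type's segments in hull order, so its output is a prefix-per-type selection, whose value is therefore at most the optimum of the original problem; on the other hand the greedy is optimal for the unconstrained fractional knapsack over the segments, so its value is at least that of every prefix-per-type selection and hence at least the optimum --- so it equals the optimum. Finally I would match Steps~3--5 of \fHK\ to this greedy: $last[t_k]$ tracks the current hull position of each type, $(\alpha_i)$ encodes the corresponding partial solution with $\sum_{i<k}\alpha_i c_i$ equal to its cost, ``advancing'' type $t_k$ from $last[t_k]$ to $k$ adds exactly $\Delta c_k$ in cost and $\Delta v_k$ in value, the loop stops just before an advance would push the total cost above $B$, and Step~5 spends the leftover budget along the last edge.

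The step I expect to be the real work is the first one: promoting Lemma~\ref{lemma_HK} to the ``prefix of hull edges'' normal form, since the argument must handle the per-type value-for-price curve and the split of the global budget across types simultaneously (the convexity argument inside Lemma~\ref{lemma_HK} already contains its essence, so it should go through). The other two steps are routine but must be done carefully --- in particular one must read the loop condition and Step~5 of \fHK\ as comparing the \emph{marginal} cost $c(k)-c(last[t_k])$ of the next advance against the remaining budget, so that feasibility and the budget are accounted for correctly and $B$ is used in full.
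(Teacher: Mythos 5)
Your proposal is correct and takes essentially the same route as the paper's proof: per-type convex hulls whose edges are turned into marginal cost/value segments (via $\tilde c_k=c_k-c(last[t_k])$, $\tilde v_k=v_k-v(last[t_k])$), reducing the problem to a standard fractional knapsack on which Steps~2--5 of \fHK{} run the classical greedy, with Lemma~\ref{lemma_HK} supplying the normal form that restricts attention to hull vertices with at most one split type. Your write-up merely spells out more explicitly (the prefix-per-type argument) what the paper leaves as a sketch.
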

\begin{proof}~

\begin{itemize}
\item If we draw every item $i\in t^{-}_j\cup\{0\}$ as a point $(c_i,v_i)$ in
      the plain (see fig.~\ref{figure:star}), then all picked items in $A_j$ will correspond to the part of
      convex hull's vertices of the drawn set from $(0,0)$ to the item with
      maximal value. Computed value of $tg$ will correspond then to the
      tangent of the side of the convex hull with the right end at the given
      item.
\item As in the proof of lemma \ref{lemma_HK} one can find the optimal value,
      that we can get for a type $j$ at the price $c$, by taking the
      $y$-coordinate of the point on a side of convex hull with $c$ at
      $x$-coordinate. Thus for the optimal fractional
      solution we only need items from $A=\cup_j A_j$.
\item Taking everything above into account we can reduce the heterogeneous
      knapsack to the basic knapsack problem. Fix a type $j$ and construct the
      instance of the reduced problem $\tilde{K}_j$ as follows. For each item
      $k\in A_j$ assign the cost $\tilde{c}_k:=c_k-c(last[t_k])$ and the value
      $\tilde{v}_k:=v_k-v(last[t_k])$. It is easy to see that optimal
      solution to basic knapsack problem $\tilde{K}_j$ gives the same value as
      the solution to the original heterogeneous problem restricted to the
      items of type $j$ for any given budget. Hence the optimal fractional
      solution to basic knapsack problem $\cup_j\tilde{K}_j$ has the same
      value as the optimal fractional solution to the original problem.
\item Now it easy to check that our algorithm at stages $2-5$ computes the
      optimal fractional solution to the reduced knapsack problem and
      thus finds the optimal fractional solution to our original problem.
\end{itemize}

\end{proof}

\subsection{Greedy Strategy with Deletions}

We consider the following greedy strategy mechanism.

\begin{center}
\small{}\tt{} \fbox{
\parbox{5.0in}{
\hspace{0.05in} \\[-0.05in] $\greH$
\begin{enumerate}
\item Take the same ordered set $A$ as in Step~2 of \fHK
\item Let $k=1$, $S=\emptyset$, and $last[j]=0$ for $j\in[m]$
\item While $k\le |A|$ and $c(k)-c(last[t_k])\le B\cdot \frac{v(k)-v(last[t_k])}{v(k)-v(last[t_k])+\sum_{i\in S}v(i)}$
			\begin{itemize}
			\item let $S \leftarrow (S\setminus\{last[t_k]\})\cup \{k\}$
			\item let $last[t_k] = k$
			\item let $k\leftarrow k+1$
			\end{itemize}
\item Return winning set $S$
\end{enumerate}
}}
\end{center}

Recall the notation in the algorithm \fHK, $tg_k=\frac{v(k)-v(last[t_k])}{|c(k)-c(last[t_k])|}$, where $last[t_k]$ is the last item of type $t_k$ in $A$ at the moment when we are about to add $k$ into $A$. Define $S_k=(S\setminus\{last[t_k])\cup\{k\}$.
Then the second condition in Step~3 of $\greH$ can be rewritten as $$tg_k\ge \frac{v(S_k)}{B}$$

We next analyze the mechanism \greH. Let us denote by $\calM_b$ the run of mechanism $\greH$ on bid $b$ (with the corresponding ordered set $A_b$, the last item of each type $last_b[t_k]$ and marginal tangent $tg_k(\calM_b)$).

\begin{claim}
$\greH$ is monotone (and therefore truthful).
\end{claim}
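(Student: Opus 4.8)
The plan is to show that if an item $j$ wins when it bids $b_j$, it also wins when it lowers its bid to $b_j' < b_j$ (all other bids fixed). The main subtlety, as the paper emphasizes, is that $\greH$ performs deletions, so lowering $j$'s bid can reorder $A$, change which items are ``active'' as the $last$ of their type at any moment, and therefore change the whole trajectory of winning sets. So I would not try to argue that the run is unchanged; instead I would track $j$ through the new run and show it survives the stopping condition $tg_k \ge v(S_k)/B$.

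First I would set up the two runs $\calM_b$ and $\calM_{b'}$ carefully, where $b' $ agrees with $b$ except $b_j' < b_j$. The key structural observations I expect to need are: (i) in the new instance, $j$'s marginal tangent relative to the previous active item of its type only increases (since its cost dropped while values are public), so $j$ moves earlier, or at least no later, in the sorted order $A_{b'}$; (ii) for any prefix of the processing order, I want to compare the value $v(S)$ of the accumulated winning set in the two runs and show that when $\greH$ is about to consider $j$ in $\calM_{b'}$, the accumulated set is ``no better'' (in the sense of having value at most that in $\calM_b$ at the corresponding point), so the threshold $v(S_j)/B$ that $j$ must beat is no larger. Combined with $j$'s cost having decreased, the inequality $c(j) - c(last[t_j]) \le B \cdot \frac{v(j)-v(last[t_j])}{v(j)-v(last[t_j]) + \sum_{i\in S} v(i)}$ should still hold, so $j$ is added.

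The hard part will be (ii): controlling how the accumulated winning set $S$ evolves differently in the two runs before $j$ is reached, given that deletions make $S$ non-monotone as a set. I would try to prove a coupling/exchange lemma: process both runs in parallel and show by induction on the processing step that the set of items accepted so far in $\calM_{b'}$ dominates (type-by-type, via the convex-hull / $tg$ structure inherited from \fHK) the set in $\calM_b$ in a way that only helps $j$ — intuitively, lowering one bid can only make the greedy ``richer'' up to the point it reaches $j$, and once $j$ is in, a symmetric argument (or the same invariant) shows $j$ is never deleted afterward because deletion of $j$ would require a later same-type item $j'$ to be accepted, and one checks $j'$ would then also have been accepted in $\calM_b$, contradicting that $j$ was the final same-type winner there (or using that $j$ winning in $\calM_b$ pins down the relevant same-type comparisons). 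I expect the write-up to lean on the tangent ordering from \fHK\ and on Lemma~\ref{submodular average}-style averaging to compare marginal values across the reorderings, and to require a somewhat delicate case analysis depending on whether $last[t_j]$ is the same item in both runs.
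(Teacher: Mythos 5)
Your plan correctly identifies the right statement to prove (a winner who lowers his bid stays a winner, equivalent to the paper's contrapositive that a loser cannot win by raising his bid) and correctly locates the difficulty: because of deletions, the whole trajectory of winning sets changes with the bid, so one must compare two genuinely different runs. But the proposal stops exactly where the proof has to start. The coupling invariant you would need in step (ii) is never formulated consistently, let alone proved: you first ask that the accumulated value in $\calM_{b'}$ be \emph{at most} that in $\calM_b$ when $j$ is reached (so that the threshold $v(S_j)/B$ is no harder), and a few lines later you ask that the accepted set in $\calM_{b'}$ \emph{dominate} the one in $\calM_b$ (``richer''), which is the opposite inequality. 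In fact you need both effects at different moments --- a low accumulated value when $j$ is considered (so $j$ enters) and effectively a high one afterwards (so no later same-type item displaces $j$) --- and a single monotone domination invariant cannot deliver both; this tension is precisely why the argument is delicate. Likewise, the assertion ``one checks $j'$ would then also have been accepted in $\calM_b$'' is not a check but the crux: in the low-bid run the accumulated values, and hence the acceptance thresholds $v(S_{j'})/B$, can be smaller than in $\calM_b$, so acceptance of $j'$ in $\calM_{b'}$ does not transfer to $\calM_b$ by any direct comparison; one has to play the tangent inequalities against the value inequalities.

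For comparison, the paper's proof does not use a step-by-step coupling at all. It argues the contrapositive: if $j$ loses at cost $c_j$ but wins at bid $b_j>c_j$, let $j'$ be the same-type item that replaced $j$ in the truthful run $\calM_c$, and let $k$ be the item at which the run $\calM_b$ stops; after noting that $A_b$ and $A_c$ agree outside type $t_j$, that $tg_j$ only decreases when the bid increases, and that $v(S)$ and $v(last[\cdot])$ increase along a run, it splits into the two cases $t_k\neq t_j$ and $t_k=t_j$ and derives the chain
$tg_k(\calM_b)\ge tg_{j'}(\calM_b)\ge tg_{j'}(\calM_c)\ge v(S_{j'}(\calM_c))/B\ge v(S_k(\calM_b))/B$,
contradicting that $\calM_b$ stopped at $k$; hence $\calM_b$ reaches and accepts $j'$, which expels $j$. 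Your sketch contains no analogue of this chain or of the case analysis comparing $v(S_{j'}(\calM_c))$ with $v(S_k(\calM_b))$, so as it stands there is a genuine gap: the central comparison between the two runs is asserted, not established.
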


\begin{proof}
We will show that any losing item cannot bid more and become a winner.
Assume otherwise that item $j$ loses with bid $c_j$ but wins with bid $b_j>c_j$, given that all others bid $c_i$, $i\neq j$.

Note that when $j$ changes his bid, it will only affect the convex hull of items in $t_j^-\cup \{0\}$. The following observations can be verified easily (see fig.~\ref{figure:star}):
\begin{enumerate}
\item Values $v(S)$ of the set of winners and $v(last[t_k])$ for each type $t_k$, taking dynamically in the process of the mechanism, keep increasing.
\item Value $tg_j$ decreases when $j$ increases its bid (since point $(b_j,v_j)$ is on the right hand side of point $(c_j,v_j)$).
\item Ordered set $A_b\setminus t^{-}_j$ is the same as ordered set $A_c\setminus t^{-}_j$
\end{enumerate}

By considering the convex hull for $t^{-}_j$, one can easily see that
if $j$ was not getting at any moment in the winning set $S$ in $\calM_c$ it also will never get in the winning set in $\calM_b$.

Let us explain why for $j$ increasing its bid can not help to remain in the winning set if for the current cost $c_j$ it has been dropped off.

Note that in the new ordered set $A_b$, there can be new items of the same type as $j$ (e.g. $last_{c}[j]$ can be different from $last_{b}[j]$), but nevertheless $tg_j(\calM_b)\le tg_j(\calM_c)$. Let $j'\in t^{-}_j$ be the item that substitutes $j$ in $\calM_c$, then $tg_{j'}(\calM_c)\le tg_{j'}(\calM_b)$ (note that $j'$ necessarily appears in $A_b$). Let $k$ be an item at which $\calM_b$ has stopped, i.e. the first item that we have not taken in the winning set. Assume $k$ stands in $A_b$ not further than $j'$.
Consider two cases.

\begin{enumerate}
\item Let $t_k\neq t_j$. Then
			\begin{itemize}
			\item $tg_{k}(\calM_c)=tg_{k}(\calM_b)$
			\item $v(S_{j'}(\calM_c))\ge v(S_k(\calM_c))$, as $j'$ stands later than $k$ in $A_c$
			\item $v(S_k(\calM_c))=v(S_k(\calM_b))$, since in both $S_k(\calM_b)$
						and $S_k(\calM_c)$ for $t_j$ type we have taken $j$ as well as for
						each other type we have taken the same item.
			\end{itemize}

\item $t_k=t_j$. Then
			\begin{itemize}
			\item $tg_{j'}(\calM_c)\le tg_{j'}(\calM_b)\le tg_{k}(\calM_b)$
			\item $v(S_{j'}(\calM_c))\ge v(S_k(\calM_b))$. The last equality
						holds true, because for each type the value of the item in
						$S_{j'}(\calM_c)$ is greater or equal than value of the
						corresponding item in $S_{k}(\calM_b)$.
			\end{itemize}
		
\end{enumerate}

In both cases we can write
$$tg_k(\calM_b)\ge tg_{j'}(\calM_b)\ge tg_{j'}(\calM_c)\ge\frac{v(S_{j'}(\calM_c))}{B}\ge\frac{v(S_{k}(\calM_b))}{B}$$

Thus we have to take $k$ in $\calM_b$ to the winning set. Hence we arrive at a contradiction. Hence we have taken $j'$ to the winning set in $\calM_b$ and therefore exclude $j$.
\end{proof}

Unfortunately, in contrast to knapsack case this scheme does not possess the following property: any $i\in S$ cannot control the output set given that $i$ is guaranteed to be a winner.

\begin{claim}
\label{bound_payment_hk}
Let $S$ be the winning set of $\greH$ on cost vector $c$. Then no item $j\in S$ can be remained a winner with bid $b_j$ satisfying
$$b_j > (v(j)-v(last_c[t_j]))\cdot\frac{B}{V(S)}+c(last_c[t_j])$$
\end{claim}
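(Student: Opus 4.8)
The plan is to adapt the argument of Lemma~\ref{lemma-pay-upper} to the replacement structure of $\greH$, with $V(S)=\sum_{i\in S}v(i)$ playing the role that $v(S_k)$ plays there. Write $j'=last_c[t_j]$ for the item that $j$ replaced in the run $\calM_c$, so that $\greH$ accepts $j$ in $\calM_c$ precisely because $tg_j(\calM_c)=\frac{v(j)-v(j')}{c(j)-c(j')}\ge\frac{v(S^c_j)}{B}$ for the winner set $S^c_j$ that $\calM_c$ holds right after adding $j$, and the claimed threshold is $\bigl(v(j)-v(j')\bigr)\tfrac{B}{V(S)}+c(j')$. I would assume toward a contradiction that $j$ still wins when it raises its bid to some $b_j>\bigl(v(j)-v(j')\bigr)\tfrac{B}{V(S)}+c(j')$, and denote this modified run by $\calM_b$.

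The first step is a global bound, the analog of $c(S_k)\le B/2$ in Lemma~\ref{lemma-pay-upper}. If $k^\ast$ is the \emph{last} item added in $\calM_c$, its acceptance test is exactly $tg_{k^\ast}(\calM_c)\ge V(S)/B$ (the winners right after $k^\ast$ are $S$), and since $\greH$ processes items in non-increasing $tg$-order, every item $i$ added in $\calM_c$ satisfies $tg_i(\calM_c)\ge V(S)/B$, i.e.
\[
c(i)-c\bigl(last_c[t_i]\bigr)\ \le\ \frac{B\,\bigl(v(i)-v(last_c[t_i])\bigr)}{V(S)}.
\]
Telescoping these inequalities along each type — the marginal values of a type sum to the value of its winning item, and likewise for costs — gives $c(S)\le B$. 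The same reasoning applied to $\calM_b$ gives $tg_i(\calM_b)\ge v(S^b_{\mathrm{fin}})/B$ for every item $i$ accepted by $\calM_b$, where $S^b_{\mathrm{fin}}$ is the final winner set of $\calM_b$.

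Next I would analyse $\calM_b$. By the observations recorded in the proof of the monotonicity claim, $A_b$ agrees with $A_c$ outside type $t_j$, the current winner value only increases along any run, and $tg_j(\calM_b)\le tg_j(\calM_c)$; moreover, because $j$ moves rightwards in the $(c,v)$-plane, $j'$ remains a hull vertex of type $t_j$ in $\calM_b$ and $j$'s (new) hull-predecessor $q:=last_b[t_j]$ satisfies $v(j')\le v(q)<v(j)$ and $c(j')\le c(q)<b_j$, while the type-$t_j$ hull chain from $j'$ up to $q$ telescopes (using the previous paragraph applied to $\calM_b$) to $c(q)-c(j')\le\frac{B\,(v(q)-v(j'))}{v(S^b_{\mathrm{fin}})}$. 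Let $S^b$ be the winner set of $\calM_b$ just before $j$ is added and $S^b_j=\bigl(S^b\setminus\{q\}\bigr)\cup\{j\}$ the winner set just after; acceptance of $j$ gives $b_j\le c(q)+\frac{B\,(v(j)-v(q))}{v(S^b_j)}$. In the ``easy case'' $S\subseteq S^b_j$, so $v(S^b_j)\ge V(S)$ and also $v(S^b_{\mathrm{fin}})\ge V(S)$, and combining the last two displayed-type bounds yields $b_j\le c(j')+\frac{B\,(v(j)-v(j'))}{V(S)}$, contradicting the over-bid. Otherwise, exactly as in Lemma~\ref{lemma-pay-upper}, I would take the winning items of $\calM_c$ missing from $S^b_j$ and apply the mediant inequality (the additive, and more generally the replacement, analog of Lemma~\ref{submodular average}); together with the per-item bounds $tg_i(\calM_c)\ge V(S)/B$ and the aggregate bound $c(S)\le B$ this controls the value $\calM_b$ can be ``missing'' relative to $\calM_c$ and produces a lower bound on $v(S^b_j)$ strong enough to run the easy-case computation again. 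Getting the precise constant $1$ in the statement here is exactly where one must use the \emph{exact} $\greH$-threshold rather than a weakened one, since $\greH$ works on the full budget $B$ and — unlike $\Gmodular(A,B/2)$ — has no factor-two slack to spend.

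The step I expect to be the main obstacle is the convex-hull bookkeeping for type $t_j$: when $j$ raises its bid the ordered set $A_b$ changes within type $t_j$, $j$'s hull-predecessor need no longer be $j'$, and items of type $t_j$ that were below the hull can become vertices and enter $\calM_b$'s winner set. Pinning down that $j$ winning in $\calM_b$ still forces it to be a hull vertex whose predecessor $q$ satisfies $v(j')\le v(q)$ together with the telescoped cost bound above, and that the type-$t_j$ winners of $\calM_b$ contribute nothing beyond what $j'$ already did, is precisely the extra work the plain-knapsack argument of Theorem~\ref{th_M_knapsack} avoids (there no winner can control the output at all). Once these structural facts are settled, the remaining value-versus-budget accounting is a routine adaptation of Lemma~\ref{lemma-pay-upper}.
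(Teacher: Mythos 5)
Your ``easy case'' ($S\subseteq S^b_j$) does go through, and several of its ingredients (the ordering of $A_b$ outside type $t_j$ being unchanged, $tg_j(\calM_b)\le tg_j(\calM_c)$, $last_c[t_j]$ remaining a hull vertex, the per-item bound $tg_i\ge v(S^b_{\mathrm{fin}})/B$ for accepted items) are correct. But the dichotomy you chose is not the right one, and the complementary case is where the substance lies. Since over-bidding lowers $tg_j$, items that $\calM_c$ never reached can move ahead of $j$ in $A_b$ and replace members of $S$ by strictly more valuable same-type items, so set containment $S\subseteq S^b_j$ routinely fails even in benign situations; your plan for that case --- a mediant/Lemma~\ref{lemma-pay-upper}-style bound on the value $\calM_b$ is ``missing'' --- cannot deliver the constant $1$. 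Lemma~\ref{lemma-pay-upper}'s accounting leans on the half-budget slack of $\Gmodular(A,B/2)$ (it only ever concludes $v(S_k)<2\,v(S\cup\{j\})$), and, as you yourself observe, $\greH$ runs on the full budget, so that argument yields nothing here. As written, the hard case is a genuine gap, not a routine adaptation.

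The missing idea is an ordering argument that makes the case split unnecessary. The over-bid together with the hull-slope inequality $tg_j(\calM_b)=\frac{v(j)-v(last_b[t_j])}{b_j-c(last_b[t_j])}\le\frac{v(j)-v(last_c[t_j])}{b_j-c(last_c[t_j])}$ gives $tg_j(\calM_b)<v(S)/B$ outright. Every item $i$ accepted by $\calM_c$ with $t_i\ne t_j$ has $tg_i(\calM_b)=tg_i(\calM_c)\ge tg_k(\calM_c)\ge v(S)/B$ (where $k$ is the last item $\calM_c$ accepts), hence precedes $j$ in $A_b$; and because the greedy loop halts at its first rejection while $j$ wins in $\calM_b$, every such $i$ is actually accepted in $\calM_b$ before $j$ is reached. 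Thus at the moment $j$ is added, each type other than $t_j$ already has a current winner of value at least that of its representative in $S$, and $j$ itself covers type $t_j$, so $v(S_j(\calM_b))\ge v(S)$ unconditionally. The acceptance test then forces $tg_j(\calM_b)\ge v(S_j(\calM_b))/B\ge v(S)/B$, contradicting the first inequality --- with no telescoping along the type-$t_j$ hull, no bound $c(S)\le B$, and no case analysis. The fact you never invoke, and without which your hard case has no proof, is precisely ``everything preceding a winner in the ordered list is accepted,'' combined with the observation that the over-bid pushes $j$ behind all other-type members of $S$ in $A_b$.
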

\begin{proof}
Assume to the contrary that there exist such $j$ and bid $b_j$. We can write
$$tg_j(\calM_b)=\frac{v(j)-v(last_b[t_j])}{b_j-c(last_b[t_j])}\le \frac{v(j)-v(last_c[t_j])}{b_j-c(last_c[t_j])}<\frac{v(S)}{B}$$

Consider the ordered set $A_c$ and let $k$ be the last item we have taken in the winning set in $\calM_c$. Now consider any item $i\in[1,k]$ where $t_j\neq t_i$. We have $\frac{v(S)}{B}\le tg_k(\calM_c)\le tg_i(\calM_c)=tg_i(\calM_b)$. By the assumption that $j$ is in the winning set in $\calM_b$ and $tg_j(\calM_b)<\frac{v(S)}{B}\le tg_i(\calM_b)$, we get that $S_j(\calM_b)$ contains an item $i'$ with  $t_i=t_{i'}$ and $v(i')\ge v(i)$. Since $j$ is in $S$ and in $S_j(\calM_b)$ we get $v(S_j(\calM_b))\ge v(S)$. Hence

$$\frac{v(S)}{B}> tg_j(\calM_b)\ge\frac{v(S_j(\calM_b))}{B}\ge\frac{v(S)}{B}$$
which gives a contradiction.
\end{proof}

\begin{claim}
\label{budget_feasible_hk}
Greedy scheme $\greH$ is budget feasible.
\end{claim}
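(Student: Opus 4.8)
The plan is to bound, for every winner $w$ in the winning set $S$ produced by $\greH$ on the true costs, the threshold payment $p_w$ by $B\cdot\frac{v(w)}{v(S)}$, and then sum over $w\in S$; since losers are paid $0$ and (by the monotonicity already established) each winner is paid its threshold bid, this gives $\sum_i p_i=\sum_{w\in S}p_w\le\frac{B}{v(S)}\sum_{w\in S}v(w)=B$.

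First I would record a structural fact about the run of $\greH$. Because the \texttt{while} loop halts at the very first item that fails its test, the items processed by $\greH$ form an initial segment $1,2,\dots,K$ of the order $A$, and each of them is added to the running winning set at the moment it is processed; in particular their marginal tangents satisfy $tg_1\ge tg_2\ge\cdots\ge tg_K$. Applying the stopping test in its rewritten form $tg_k\ge v(S_k)/B$ to the last processed item $k=K$, and observing that $S_K$ is exactly the final winning set $S$, gives $tg_K\ge v(S)/B$; combined with monotonicity of the $tg_k$'s this yields $tg_k\ge v(S)/B$ for \emph{every} $k\le K$, i.e.\ for every item that is ever placed in the winning set.

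Next comes a telescoping estimate on the cost of the item each winner replaced. Fix $w\in S$, of type $t_w$, and let $0=j_0,j_1,\dots,j_r=w$ be the items of type $t_w$ added by $\greH$ in order, so $j_i$ replaces $j_{i-1}$, $j_0$ is the dummy item of zero cost and value, and $j_{r-1}=last_c[t_w]$. Each $j_i$ with $i\ge1$ is a processed item, hence $tg_{j_i}\ge v(S)/B$; since $c$ is increasing along the chain this reads $c(j_i)-c(j_{i-1})\le\frac{B}{v(S)}\,(v(j_i)-v(j_{i-1}))$. Summing over $i=1,\dots,r-1$ telescopes (using $c(j_0)=v(j_0)=0$) to $c(last_c[t_w])=c(j_{r-1})\le\frac{B}{v(S)}\,v(j_{r-1})=\frac{B}{v(S)}\,v(last_c[t_w])$.

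Finally I would invoke Claim~\ref{bound_payment_hk}, which bounds the threshold payment to $w$ by $(v(w)-v(last_c[t_w]))\cdot\frac{B}{v(S)}+c(last_c[t_w])$; plugging in the telescoping bound makes this at most $(v(w)-v(last_c[t_w]))\cdot\frac{B}{v(S)}+v(last_c[t_w])\cdot\frac{B}{v(S)}=B\cdot\frac{v(w)}{v(S)}$, and summing over $w\in S$ finishes. The point that needs care — and the crux of the argument — is the first step: one really needs $tg_k\ge v(S)/B$ for every item that ever enters the winning set, not just for the surviving winners (the a priori bound $tg_k\ge v(S_k)/B$ at the moment $k$ is added is weaker, since $v(S_k)$ can be smaller than $v(S)$); only with the uniform bound does the telescoping sum over the replaced items control their cost by $\frac{B}{v(S)}$ times their value. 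This in turn relies on both features of $\greH$: items are processed in non-increasing $tg$ order, and the greedy exits at the first failing item, so the last added item's current winning set is the final $S$.
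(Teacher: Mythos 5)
Your proof is correct and follows essentially the same route as the paper: bound each winner's payment via Claim~\ref{bound_payment_hk}, telescope the inequalities $c(i_\ell)-c(i_{\ell-1})\le (v(i_\ell)-v(i_{\ell-1}))\frac{B}{v(S)}$ along the chain of same-type items that entered the winning set, and sum to get $\sum_{j\in S}p_j\le B$. The only difference is that you explicitly justify the uniform bound $tg_k\ge v(S)/B$ for every item ever added (via the non-increasing $tg$ order and the fact that the last accepted item's current set is the final $S$), a step the paper asserts without comment.
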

\begin{proof}
Let $S$ be a winning set for $\calM$. By Claim~\ref{bound_payment_hk}, we have an upper bound on the payment $p_j$ to each item $j\in S$, i.e.,
$$p_j\le (v(j)-v(last_c[t_j]))\cdot\frac{B}{V(S)}+c(last_c[t_j])$$
Let $0 = i_0, i_1,\ldots ,i_r, i_{r+1} =j$ be the items of type $t_j$ that have appeared in the winning set. We have $tg_{i_\ell}\ge\frac{v(S)}{B}$ for each $\ell=1,\ldots,r$. Hence $$c(i_\ell)-c(i_{\ell-1})\le (v(i_\ell)-v(i_{\ell-1}))\frac{B}{v(S)}$$
Now if we sum up the above inequalities on $c(i_l)-c(i_{l-1})$ for all $\ell=1,\ldots,r$ and plug it in the bound on $p_j$, we get
$$p_j\le\frac{B}{v(S)}\sum_{\ell=1}^{r+1}v(i_\ell)-v(i_{\ell-1})=v(j)\frac{B}{v(S)}$$
Therefore, $\sum_{j\in S}p_j\le B$, which concludes the proof.
\end{proof}

\subsection{Mechanisms}

Given the greedy strategy described above, our mechanism for heterogeneous knapsack is as follows.

\begin{center}
\small{}\tt{} \fbox{
\parbox{4.0in}{
\hspace{0.05in} \\[-0.05in] \MHknapsack
\begin{enumerate}
\item Let $A=\{i~|~c_i\le B\}$ and $i^*\in \arg\max_{i\in A} v_i$
\item If $(1+\sqrt{2})\cdot v_{i^*} \ge \fHK(A\setminus \{i^*\})$, return $i^*$
\item Otherwise, return $S= \greH$
\end{enumerate}
}}
\end{center}

\begin{theorem}
\MHknapsack\ is a $2+\sqrt{2}$ approximation budget feasible truthful
mechanism for heterogeneous knapsack.
\end{theorem}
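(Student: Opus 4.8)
\emph{Proof proposal.} The plan is to establish the three properties — truthfulness, individual rationality together with budget feasibility, and the approximation bound $2+\sqrt{2}$ — following the same template as Theorem~\ref{th_M_knapsack}, but with \gre{} replaced by \greH, $fopt$ replaced by \fHK, and invoking the facts already proved for \greH: that it is monotone, Claim~\ref{bound_payment_hk} on its threshold bids, and Claim~\ref{budget_feasible_hk} on its budget feasibility. Throughout let $S=\greH$ (so $v(S)=\sum_{i\in S}v_i$), and recall that $\greH$ and \fHK\ both scan the \emph{same} list $A$ in the order $tg_1\ge tg_2\ge\cdots$, where $tg_i=\frac{v(i)-v(last[t_i])}{|c(i)-c(last[t_i])|}$ and $last[t_i]$ is the preceding item of type $t_i$ in $A$.

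For \textbf{truthfulness} I would argue monotonicity directly. The identity of $i^*$ depends only on the public values, and $\fHK(A\setminus\{i^*\})$ does not involve $c_{i^*}$, so the test of Step~2 is independent of $i^*$'s bid; hence if $i^*$ wins in Step~2 it keeps winning after lowering its bid. If instead Step~3 is executed and a winner $i\in S$ lowers its bid, the identity of $i^*$ is unchanged, and for $i\ne i^*$ decreasing $c_i$ only relaxes the budget constraint in the fractional program on $A\setminus\{i^*\}$, so $\fHK(A\setminus\{i^*\})$ does not decrease and the Step~2 test still fails (for $i=i^*$ it is unaffected); since $\greH$ is monotone, $i$ remains a winner. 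For \textbf{individual rationality and budget feasibility}: if $i^*$ wins in Step~2 its threshold is $B$ (raising its bid up to $B$ changes neither its membership in $A$ nor the Step~2 test), so the single payment is $B\le B$. Otherwise the winners are those of $\greH$, and the threshold of each $j\in S$ is the minimum of its threshold inside $\greH$ and the largest bid for which Step~2 still fails; in particular it is at most the bound of Claim~\ref{bound_payment_hk}, so summing over $j\in S$ exactly as in the proof of Claim~\ref{budget_feasible_hk} gives $\sum_{j\in S}p_j\le B$, while $p_j\ge c_j$ gives individual rationality.

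The substance is the \textbf{approximation bound}, and the key estimate I would prove is the analogue of the knapsack inequality, namely $\fHK(A)<v_{i^*}+2\,v(S)$. Say $\greH$ adds the prefix $1,\dots,k$ of $A$ and then fails at item $k+1$, i.e. $tg_{k+1}<v(S_{k+1})/B$ with $v(S_{k+1})\le v(k+1)+v(S)\le v_{i^*}+v(S)$. Along $\greH$ the marginal costs $\tilde c_i=c(i)-c(last[t_i])$ telescope per type to $\sum_{i=1}^{k}\tilde c_i=c(S)=\sum_{j\in S}c_j\le B$ (by budget feasibility and individual rationality), so \fHK\ also takes all of $1,\dots,k$ and continues, taking items fully up to some $\ell\ge k$ and a fraction $\alpha$ of item $\ell+1$ with $\sum_{i=1}^{\ell}\tilde c_i+\alpha\tilde c_{\ell+1}=B$. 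For every $j\in\{k+1,\dots,\ell\}$ and for the fractional item we have $tg_j\le tg_{k+1}<(v_{i^*}+v(S))/B$, hence $\tilde c_j>\tilde v_j\,B/(v_{i^*}+v(S))$; summing these and using $\sum_{j=k+1}^{\ell}\tilde c_j+\alpha\tilde c_{\ell+1}=B-c(S)\le B$ yields $\sum_{j=k+1}^{\ell}\tilde v_j+\alpha\tilde v_{\ell+1}<v_{i^*}+v(S)$. Since the marginal values also telescope per type, $\sum_{i=1}^{k}\tilde v_i=v(S)$, and $\fHK(A)=\sum_{i=1}^{\ell}\tilde v_i+\alpha\tilde v_{\ell+1}$, adding the two bounds gives $\fHK(A)<v_{i^*}+2\,v(S)$. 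Combining with $opt\le\fHK(A)$ and $\fHK(A)-v_{i^*}\le\fHK(A\setminus\{i^*\})\le\fHK(A)$, I finish exactly as in Theorem~\ref{th_M_knapsack}: if Step~2 fires then $opt\le\fHK(A\setminus\{i^*\})+v_{i^*}\le(2+\sqrt{2})v_{i^*}$ and the output value is $v_{i^*}$; if Step~3 fires then $(1+\sqrt{2})v_{i^*}<\fHK(A\setminus\{i^*\})\le\fHK(A)<v_{i^*}+2\,v(S)$ forces $v_{i^*}<\sqrt{2}\,v(S)$, hence $opt<(2+\sqrt{2})v(S)$ and the output value is $v(S)$.

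The \textbf{main obstacle} is precisely this key estimate $\fHK(A)<v_{i^*}+2\,v(S)$: it requires matching \fHK\ and $\greH$ on the common $tg$-ordered reduced instance $\cup_j\tilde K_j$, observing that both the marginal costs and the marginal values telescope per type (so that $c(S)\le B$ lets \fHK\ reach at least as far as $\greH$, and so that $\sum_{i\le k}\tilde v_i=v(S)$), and then using $\greH$'s stopping condition $tg_{k+1}<v(S_{k+1})/B$ to control the tangents — hence the costs — of the items \fHK\ picks beyond step $k$. The edge cases in which $\greH$ never stops or \fHK\ exhausts $A$ are easily seen to give the same bound. Everything else is routine once the Claims on $\greH$ are in hand.
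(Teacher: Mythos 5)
Your proposal is correct and follows essentially the same route as the paper: truthfulness and budget feasibility are handled exactly as in the knapsack mechanism together with Claims~\ref{bound_payment_hk} and~\ref{budget_feasible_hk}, and the approximation bound comes from passing to the type-telescoped instance with marginal values $\tilde v$ and costs $\tilde c$, on which $\greH$ coincides with $\gre$ and \fHK\ with the fractional greedy, so the knapsack estimate $\fHK(A)<v_{i^*}+2\,v(S)$ applies. The only difference is presentational: the paper invokes the reduction and cites the approximation part of Theorem~\ref{th_M_knapsack}, while you re-derive that inequality explicitly on the reduced instance.
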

\begin{proof}
The proof consists of each property stated in the claim.
\begin{itemize}
\item {\em Truthfulness.} The same proof as for knapsack also works here.

\item {\em Individual rationality and budget feasibility.} If $i^*$ wins in Step~2, his payment is the threshold bid $B$. Otherwise, payment to each item has an upper bound from the payment rule in $\greH$ and thus according to the claim \ref{budget_feasible_hk} final total payment will be below given budget $B$.

\item {\em Approximation.} Return back to the algorithm for optimal fractional heterogeneous knapsack. Consider the stage where we add item $k$ to a set $A_j$, let us define $\tilde{v}(k)=v(k)-v(last[t_k])$ and $\tilde{c}(k)=c(k)-c(last[t_k])$ to be modified value and cost of item $k$. Let us consider fractional knapsack $\tilde{FK}$ problem for those modified costs and values for all items in $A$. It turns out that for any budget this new problem $\tilde{FK}$ has the same answer as initial heterogeneous knapsack $HK$. Note that our greedy scheme $\gre$ for modified costs and values and our greedy scheme $\greH$ for original heterogeneous knapsack also give the same answer.
Thus applying the part {\em approximation} of claim \ref{th_M_knapsack} to the modified problem we obtain desired bound.
\end{itemize}
\end{proof}

We can also have the following randomized mechanism with approximation ratio of $3$ (its proof is similar to Theorem~\ref{theorem-random-knapsack}).

\begin{center}
\small{}\tt{} \fbox{
\parbox{3.8in}{
\hspace{0.05in} \\[-0.05in] \RMHknapsack
\begin{enumerate}
\item Let $A=\{i~|~c_i\le B\}$ and $i^*\in \arg\max_{i\in A} v_i$
\item With probability $\frac{1}{3}$, return $i^*$
\item With probability $\frac{2}{3}$, return $S= \greH$
\end{enumerate}
}}
\end{center}

\begin{theorem}
\label{th_RHM_knapsack}
\RMHknapsack\ is a $3$ approximation universal truthful budget feasible
mechanism for heterogeneous knapsack.
\end{theorem}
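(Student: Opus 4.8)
The plan is to mirror the three-part argument used for \RMknapsack\ in the proof of Theorem~\ref{th_RM_knapsack}, substituting the properties of \greH\ (the monotonicity claim proved for it, Claim~\ref{bound_payment_hk}, and Claim~\ref{budget_feasible_hk}) for the corresponding properties of \gre. Since \RMHknapsack\ is a fixed convex combination of the single-item mechanism ``return $i^*$'' and the deterministic mechanism ``return $S=\greH$'', universal truthfulness follows once we know both branches are individually truthful: the first is trivially monotone, and \greH\ is monotone by the monotonicity claim; deleting the agents with $c_i>B$ in Step~1 does not affect monotonicity. Budget feasibility holds in every realization as well: a winning $i^*$ is paid its threshold $B$, while if $S=\greH$ is returned, Claim~\ref{budget_feasible_hk} gives $\sum_{j\in S}p_j\le B$.

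It then remains to bound the expected value. First I would invoke the reduction already used in the approximation part of the proof of the $(2+\sqrt2)$-approximation mechanism \MHknapsack\ (Theorem~\ref{theorem-star-knapsack-mechanism}): setting $\tilde v(k)=v(k)-v(last[t_k])$ and $\tilde c(k)=c(k)-c(last[t_k])$ for the items $k$ of the ordered set $A$ turns heterogeneous knapsack into an ordinary fractional knapsack instance $\tilde{FK}$ with $fopt(\tilde{FK})=\fHK(A)$, and running \greH\ on the original instance yields a winning set $S$ whose value $\sum_{i\in S}v(i)$ equals the value of the set picked by \gre\ on $\tilde{FK}$. Feeding this into the greedy-stopping / fractional-optimum computation from the proof of Theorem~\ref{th_M_knapsack} gives the key inequality $opt(A)\le \fHK(A) < v_{i^*}+2\sum_{i\in S}v(i)$.

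Given this inequality the finish is the same one-line computation as in Theorem~\ref{th_RM_knapsack}: the expected value of \RMHknapsack\ equals $\tfrac13 v_{i^*}+\tfrac23\sum_{i\in S}v(i)=\tfrac13\big(v_{i^*}+2\sum_{i\in S}v(i)\big)>\tfrac13\,opt(A)$, so the approximation ratio is $3$.

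The main obstacle is the reduction to $\tilde{FK}$: one must check that the $tg$-ordering of $A$ used by \greH\ agrees with the value/cost ordering of $\tilde{FK}$, that the stopping test $tg_k\ge v(S_k)/B$ is exactly the \gre\ stopping test for the modified values and costs, and that the convex-hull analysis behind Lemma~\ref{lemma_HK} and \fHK\ indeed yields $fopt(HK)=fopt(\tilde{FK})$ for every budget. All of these points are contained in the proofs of the preceding theorems and claims, so the remaining work is bookkeeping rather than a new idea.
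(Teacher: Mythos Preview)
Your proposal is correct and follows exactly the approach the paper intends: the paper itself provides no explicit proof and merely states that it is similar to Theorem~\ref{theorem-random-knapsack}, and you have filled in precisely that argument, invoking the monotonicity of \greH, Claim~\ref{budget_feasible_hk} for budget feasibility, and the $\tilde v,\tilde c$ reduction from the approximation part of the \MHknapsack\ analysis to obtain $opt(A)\le\fHK(A)<v_{i^*}+2\sum_{i\in S}v(i)$. The one-line expectation computation then matches Theorem~\ref{th_RM_knapsack} verbatim.
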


\end{document}